\def\qq{\qquad}
\def\cm{\hspace*{1cm}}
\def\inch{\hspace*{1in}}
\def\noi{\noindent}
\def\Jl#1#2{{\it #1}\ {\bf #2},\ }
\def\ApJ#1 {\Jl{Astroph. J.}{#1}}
\def\CQG#1 {\Jl{Class. Quantum Grav.}{#1}}
\def\DAN#1 {\Jl{Dokl. AN SSSR}{#1}}
\def\GC#1 {\Jl{Grav. Cosmol.}{#1}}
\def\GRG#1 {\Jl{Gen. Rel. Grav.}{#1}}
\def\JETF#1 {\Jl{Zh. Eksp. Teor. Fiz.}{#1}}
\def\JETP#1 {\Jl{Sov. Phys. JETP}{#1}}
\def\JHEP#1 {\Jl{JHEP}{#1}}
\def\JMP#1 {\Jl{J. Math. Phys.}{#1}}
\def\NPB#1 {\Jl{Nucl. Phys. B}{#1}}
\def\NP#1 {\Jl{Nucl. Phys.}{#1}}
\def\PLA#1 {\Jl{Phys. Lett. A}{#1}}
\def\PLB#1 {\Jl{Phys. Lett. B}{#1}}
\def\PRD#1 {\Jl{Phys. Rev. D}{#1}}
\def\PRL#1 {\Jl{Phys. Rev. Lett.}{#1}}
\def\al{&\!}
\def\lal{&&\!\! {}}
\def\eq{Eq.\,}
\def\eqs{Eqs.\,}
\def\beq{\begin{equation}}
\def\eeq{\end{equation}}
\def\bear{\begin{eqnarray}}
\def\bearr{\begin{eqnarray} \lal}
\def\ear{\end{eqnarray}}
\def\earn{\nonumber \end{eqnarray}}
\def\nn{\nonumber\\ {}}
\def\nnv{\nonumber\\[5pt] {}}
\def\nnn{\nonumber\\ \lal }
\def\nnnv{\nonumber\\[5pt] \lal }
\def\yy{\\[5pt] {}}
\def\yyy{\\[5pt] \lal }
\def\eql{\al =\al}
\def\dst{\displaystyle}
\def\tst{\textstyle}
\def\fracd#1#2{{\dst\frac{#1}{#2}}}
\def\fract#1#2{{\tst\frac{#1}{#2}}}
\def\Half{{\fracd{1}{2}}}
\def\half{{\fract{1}{2}}}
\def\e{{\,\rm e}}
\def\d{\partial}
\def\const{{\rm const}}
\def\eps{\varepsilon}
\def\ep{\epsilon}
\def\rf{\eqref}
\def\eqn{\eq\rf} 
\def\mn{_{\mu\nu}}
\def\MN{^{\mu\nu}}
\def\mN{_\mu^\nu}
\def\nM{_\nu^\mu}
\def\R{{\mathbb R}}
\def\cK{{\cal K}}
\def\cO{{\cal O}}
\def\sF{{}^*\!F}
\def\kappa{\varkappa}
\def\GR{general relativity}
\def\sph{spherically symmetric}
\def\ssph{static, spherically symmetric}
\def\bh{black hole}
\def\bhs{black holes}
\def\wh{wormhole}
\def\asflat{asymptotically flat} 
\def\emag{electromagnetic} 
\def\Scw{Schwarz\-schild}
\def\RN{Reiss\-ner-Nord\-str\"om}
\begin{document}

\title{Regular black holes sourced by nonlinear electrodynamics}
% Use \titlerunning{Short Title} for an abbreviated version of
% your contribution title if the original one is too long
\author{Kirill A. Bronnikov}{}
% Use \authorrunning{Short Title} for an abbreviated version of
% your contribution title if the original one is too long
\institute{Kirill A. Bronnikov \at 
		 {VNIIMS, Ozyornaya ul. 46, Moscow 119361, Russia;\\
            Institute of Gravitation and Cosmology, Peoples' Friendship University of Russia\\ 
            (RUDN University),  ul. Miklukho-Maklaya 6, Moscow 117198, Russia;\\
	   	 National Research Nuclear University ``MEPhI'', 
	   	 Kashirskoe sh. 31, Moscow 115409, Russia.}\\
\email{kb20@yandex.ru}
}
\maketitle

\abstract{This chapter is a brief review on the existence and basic properties of \ssph\ regular \bh\ 
  solutions of \GR\ where the source of gravity is represented by nonlinear \emag\ fields with the 
  Lagrangian function $L$ depending on the single invariant $f = F\mn F\MN$ or on two variables:
  either $L(f, h)$, where $h = \sF\mn F\MN$, where $\sF\mn$ is the Hodge dual of $F\mn$,
  or $L(f, J)$, where $J = F\mn F^{\nu\rho} F_{\rho\sigma} F^{\sigma\mu}$. 
  A number of no-go theorems are discussed, revealing the conditions under which the space-time 
  cannot have a regular center, among which the theorems concerning $L(f,J)$ theories are probably
  new. These results concern both regular \bhs\ and regular particlelike or starlike objects (solitons) 
  without horizons. Thus, a regular center in solutions with an electric charge $q_e\ne 0$ 
  is only possible with NED having no Maxwell weak field limit. Regular solutions with $L(f)$ 
  and $L(f, J)$ nonlinear electrodynamics (NED), possessing a correct (Maxwell) weak-field limit,
  are possible if the system contains only a magnetic charge $q_m \ne 0$. It is shown, however, 
  that in such solutions the causality and unitarity as well as dynamic stability conditions are 
  inevitably violated in a neighborhood of the center. Some particular examples are discussed. 
  }
  
% ===============================
\section{Introduction}\label{sec1}
% ===============================

  Nonlinear electrodynamics (NED) as a generalization of Maxwell's theory was proposed
  in the 1930s: M. Born and L. Infeld's formulated a theory able to remove the central singularity 
  of the \emag field of a point charge as well as its energy divergence \cite{B-Inf}.
  Another version of NED was put forward by W. Heisenberg and H. Euler while taking into
  consideration high-energy quantum processes with photons, such as pair creation \cite{EuH}.
  Much later, J. Plebanski \cite{Pleb} developed a more general formulation of NED in 
  special relativity, admitting an arbitrary function of the electromagnetic invariants. 

  More recently, the interest in NED received a new support when it was discovered that 
  a Born-Infeld-like theory appears in the weak-field limit of some models of string theory 
  \cite{tse-1, tse-2, seiberg}. It has also turned out that NED can be a material source of gravity
  able to lead to nonsingular geometries of interest, such as regular black holes (BHs) and 
  solitonlike configurations without horizons in the framework of general relativity (GR) and various 
  alternative theories. Let us also mention one more recent application of NED, namely, using it 
  as one of the sources of gravity in Simpson-Visser-like (black-bounce) space-times 
  \cite{simp-18, fran-21, lobo-20} that are regular models simulating some expected effects 
  of quantum gravity on the classical level \cite{k-22a, k-22b, canate-22}.   
 
  This paper is devoted to NED application for obtaining regular BHs and solitons (monopoles)
  in GR. We will reproduce a number of well-known results in a somewhat pedagogical manner
  and also present some new observations. We will restrict ourselves to the simplest models 
  assuming spherical symmetry, and also mostly focus on NED theories with Lagrangians 
  of the form $L = L(f)$, where  $f = F\mn F\MN$, and $F\mn$ is the \emag\ field tensor.
  Then we will briefly discuss similar problems in some extensions of $L(f)$ theories:
  those with $L(f, h)$, where $h = \sF\mn F\MN$, where 
  $\sF\mn = \half \sqrt{-g} \eps_{\mu\nu\rho\sigma} F^{\rho\sigma}$
  is the Hodge dual of $F\mn$, and those with $L(f, J)$, where $J$ is quartic with respect 
  to $F\mn$ \cite{denis-19, sokolov-21}: $J = F\mn F^{\nu\rho} F_{\rho\sigma} F^{\sigma\mu}$.  
  
  When considering the NED-GR system in spherical symmetry, there are only two possible kinds 
  of electromagnetic fields: radial electric fields and radial (monopole) magnetic ones. Two important 
  circumstances should be taken into account. The first one is (in general) the absence of 
  duality between electric and magnetic fields, so that solutions to the field equations containing 
  these fields in the framework of the same NED theory will be quite different. Instead, there 
  emerges the so-called FP duality that connects electric and magnetic solutions for {\it different\/}
  NED theories but involving the same space-time metric. The second circumstance is that 
  it is insufficient to require finite values of the electric field itself and the electric field energy of 
  a point charge in order to obtain a regular space-time: its regularity imposes more stringent 
  requirements on NED, which cannot be satisfied, for example, by the Born-Infeld theory. 
     
  NED-GR solutions with electric or magnetic fields are currently widely discussed, probably 
  beginning with finding a general form of an electric solution by Pellicer and Torrence \cite{Pel-T}. 
  Later on, a no-go theorem was proved \cite{B-Shi, BShi2}, showing that if NED is specified by a 
  Lagrangian function $L(f)$ having a Maxwell weak-field limit ($L\sim f$ as $f\to 0$), a \ssph\ 
  solution of GR with an electric field cannot have a regular center. This theorem was extended 
  to include static dyonic configurations, involving both electric and magnetic fields \cite{k-NED}, 
  and it was also proved \cite{k-NED, B-comment} that in any electric solutions describing 
  systems with or without horizons (i.e., BH or solitonic ones), containing a regular center 
  and a flat infinity with a Reissner-Nordstr\"om (RN) asymptotic behavior, different NED 
  theories are valid at large and small $r$. The present paper describes this issue in detail. 

  It was also shown \cite{k-NED} that purely magnetic regular configurations, both BH and 
  solitonic ones, can exist and are easily obtained if $L(f)$ tends to a finite limit as $f  \to \infty$. 
  Electric solutions with the same metric can also be found, but they suffer multivaluedness of 
  $L(f)$ and inevitably exhibit infinite blueshifts of traveling photons on some surfaces
  \cite{k-NED}.

  Many further results of interest are known. In particular, the properties and examples 
  of \ssph\ dyonic NED-GR space-times were studied 
  \cite{17-dyon, krugl-19a, krugl-19b, mkr-22, kru-20, yang-22};
  a kind of phase transition was discussed, allowing one to circumvent the above no-go 
  theorem on electric solutions \cite{Bur1}; the \ssph\ solutions were extended to include 
  a nonzero cosmological constant $\Lambda$ \cite{Mat-09}; the thermodynamic properties 
  of regular NED BHs were investigated (see \cite{Bret-05, Kru-16, FW-16, kru-21, balart-21} 
  and references therein); cylindrically
  \cite{we-02} and axially \cite{Bam-13, Dym-15a, tosh-17, gdiaz-22, kubi-22} symmetric 
  (rotating) NED-GR configurations were found and studied, as well as evolving wormhole models 
  \cite{Arel-06, Boe-07, Arel-09, k-18}. 
  (Note that static \wh\ models with NED as a source are impossible because this kind of matter
  respects the weak energy condition.) Furthermore, the stability properties of NED BHs were
  investigated in \cite{Mor-03, Bret-05s, Jin-14}, and quantum effects in their fields in 
  \cite{Mat-02, Mat-13}. One should also mention a number of studies of special cases of both
  electric and magnetic solutions, their potential observational properties like gravitational lensing, 
  particle motion and matter accretion in the fields of NED BHs as compared to their counterparts 
  in scalar-tensor, $f(R)$ and multidimensional theories of gravity, consideration of NED with 
  dilaton-like interactions, non-Abelian fields, different constructions with thin shells, etc., 
  but the corresponding list of references would be too long. For a recent brief review on NED 
  with and without relation to gravitational theories see \cite{sorokin-21}.

  The most relevant to the present subject, regular BHs, are the recent results obtained by 
  Bokuli\'c, Smoli\'c and Juri\'c \cite{bokulic-21, bokulic-22} who have proved a number of
  no-go theorems in NED-GR solutions with NED Lagrangians of the form $L(f, h)$.  Let us 
  mention that this wide class of theories contains, among others, the Born-Infeld and 
  Heisenberg-Euler theories. With all these no-go theorems, it seems that regular magnetic 
  BHs with $L = L(f)$ are the only kind of regular BHs that can be found among NED-GR 
  solutions with an asymptotically Maxwell NED, although some opportunities are still 
  remaining unexplored.

  In this paper, we will discuss in detail the existence and main properties of \ssph\ regular \bhs\ 
  and solitons with $L(f)$ NED theory, and more briefly consider the same with Lagrangians 
  depending on two invariants, either $L(f,h)$ or  $L(f,J)$. We begin with discussing the 
  particular form of regularity and asymptotic conditions to be used (Section \ref{sec2}). Then, 
  in Section \ref{sec3}, we discuss the $L(f)$ NED-GR field equations in \ssph\ space-times. 
  Section \ref{sec4} is devoted to the regularity properties of \bh\ and soliton solutions to these 
  equations, their compatibility with the known NED unitarity and causality \cite{usov-11}
  and stability \cite{Mor-03} conditions as well as photon propagation in these space-times. 
  Some particular examples known in the literature are also discussed.
  Section \ref{sec5} presents some no-go theorems with $L(f,h)$ due to 
  \cite{bokulic-21, bokulic-22} and with $L(f, J)$, and the latter results seem to be new. 
  Section \ref{sec6} is a brief conclusion.    
        
  We use the following conventions: the units with $c= 8\pi G =1$; the metric signature $(+ - -\, - )$; 
  the curvature tensor $R^\sigma_{\ \mu\rho\nu} = \d_\nu \Gamma^\sigma_{\mu\nu} = \ldots$;
  the Ricci tensor $R\mn = R^\sigma_{\ \mu\sigma\nu}$, so that the Ricci scalar $R = g\MN R\mn >0$
  for de Sitter space-time. The Einstein equations are written in the form
\bearr                                                                   \label{EE}
  	  G\mN \equiv R\mN - \half \delta\mN R = - T\mN,  
\ear
  where $T\mN$ is the stress-energy tensor (SET) of matter, such that $T^t_t$ is the energy density.
    
% =======================================================
\section{Static \sph\ space-times. Regularity and asymptotic conditions}\label{sec2}
% =======================================================

  Before dealing with NED-Einstein equations, it makes sense to recall the conditions to be 
  fulfilled by the desirable solutions to these equations.

  Spherical symmetry is the simplest and natural assumption for descriptions of isolated bodies 
  when their precise shape and possible rotation are regarded insignificant. The physical fields of 
  any island-like objects are approximately \sph\ far from these objects. 
  
  In the general case, one can write a \sph\ metric in the form (see, e.g., \cite{LL})
\beq            \label{ds0} 
	   ds^2 = \e^{2\gamma} dt^2 - \e^{2\alpha} dx^2 - r^2 d\Omega^2, \cm
						d\Omega^2 = d\theta^2+\sin^2 \theta d\phi^2.
\eeq
  In general, $\alpha, \gamma, r$ are functions of the radial coordinate $x$ and the time
  coordinate $t$. The quantity $r$ has the geometric meaning of the radius of a coordinate sphere 
  $x=\const,\ t=\const$, the so-called spherical radius, or it is sometimes called the areal radius 
  since the area of a coordinate sphere is equal to $4\pi r^2$. Let us note that in curved 
  space-time this radius $r$ has nothing to do with a distance from the center (as happens in 
  flat space-time), and there are many \sph\ space-times that contain no center at all, for example, 
  wormholes.
 
  In what follows we restrict ourselves to static space-times, such that $\alpha, \gamma, r$ 
  depend on $x$ only. There still remains the freedom of choosing the radial coordinate $x$ and 
  the possibility of its reparametrizations by replacing $x = x(x_{\rm new})$. The choice of the 
  radial coordinate can be fixed by postulating a relation between the functions $\alpha, \gamma, r$ 
  or by choosing some of them (or a function of some of them) as the coordinate. For example, very 
  often the radius $r$ is used as a coordinate, it is then called the Schwarzschild (or curvature) 
  radial coordinate.
  
  The convenient ``exponential'' notations in the metric \rf{ds0}, which simplify the appearance of many
  relations without fixing the radial coordinate, assume positive values of the corresponding quantities. 
  However, the coefficients $g_{tt}$ and $g_{xx}$ can change their sign, in particular, this happens at
  \bh\ horizons.  In such cases, it is helpful to use the so-called quasiglobal coordinate condition
  $\alpha+\gamma =0$, and with the notation $\e{2\gamma} = \e^{-2\alpha} = A(x)$, the metric 
  is written as 
\beq             \label{ds1}
                     ds^2 = A(x) dt^2 - \frac {dx^2}{A(x)} -r^2 (x) d\Omega^2.
\eeq  
  
\medskip\noi   
  {\bf Regularity.} A Riemannian space-time is generally called regular at a particular point $X$ if
  the Riemann tensor is well defined at $X$ (hence the metric functions must be at least twice 
  differentiable at $X$), and all algebraic curvature invariants are finite. (Other definitions 
  of regularity, involving differential invariants of the Riemann tensor, are sometimes used, 
  but the above definition is sufficient for our purposes.) 
  Hence, the metric \rf{ds0} (or \rf{ds1}) is manifestly regular at any point where $r\ne 0$ as long as 
  the functions $\alpha(x)$, $\gamma(x)$ and $r(x)$ (or $A(x)$ and $r(x)$) are sufficiently smooth. 
  A point where $r=0$ requires special attention because the metric becomes degenerate there, 
  hence it is a singular point of the spherical coordinate system used in \rf{ds0} or \rf{ds1}. 
  Furthermore, a space-time as a whole (and in particular, a \bh\ space-time) is called regular
  if all its points are regular.     

  Very often, to verify regularity of a particular metric of the form \rf{ds0} or \rf{ds1}, one 
  directly calculates its basic invariants: the scalar curvature $R$, the Ricci tensor squared 
  $R\mn R\MN$, and the Kretschmann scalar (the Riemann tensor squared) 
  $\cK\,{=}\,R_{\alpha\beta\gamma\delta} R^{\alpha\beta\gamma\delta}$. 
  However, for a static metric \rf{ds0}, it is quite sufficient and much easier to verify finiteness
  of the four independent components $R_{\alpha\beta}{}^{\gamma\delta}$ of the Riemann 
  tensor with two upper and two lower indices: 
\bear \label{K_i}
      K_1 = -R_{01}{}^{01} \eql \e^{-\alpha-\gamma} (\gamma'\e^{\gamma-\alpha})'
             =  \Half A'',     
\nn
      K_2 = -R_{02}{}^{02} = -R_{03}{}^{03}\eql \e^{-2\alpha}\frac {\gamma' r'}{r} 
      	  =  \frac{A' r'}{2r},
\nn
      K_3 = -R_{12}{}^{12} = -R_{13}{}^{13}\eql \frac{\e^{-\alpha}}{r} (\e^{-\alpha} r')'
             = \frac{1}{2r} (2~A r'' - A' r'),      
\nn
      K_4 = -R_{23}{}^{23}\eql  \frac 1 {r^2}(1 - \e^{-2\alpha} r'^2) 
      	  = \frac 1 {r^2}(1 - A r'^2)    .
\ear
  where the prime stands for $d/dx$ ($K_i$ in terms of the metric \rf{ds1} are given in each line 
  after the last equality sign). The point is that for \ssph\ metrics, as well as and in many other 
  important cases, the tensor $R_{\alpha\beta}{}^{\gamma\delta}$ is pairwise diagonal. Therefore, 
  all algebraic curvature invariants are linear, quadratic, cubic, etc., combinations of $K_i$ from 
  \rf{K_i} and are manifestly finite if $K_i$ are finite. Moreover, the Kretschmann scalar 
  is a sum of squares: 
\beq       \label{Kre}
        \cK = 4K_1^2 + 8K_2^2 + 8K_3^2 + 4K_4^2,
\eeq
  hence it is finite if and only if {\it each\/} $K_i$ is finite. Thus finiteness of all $K_i$ 
  is both necessary and sufficient condition of space-time regularity \cite{BR-book}. 
  
  It is important to note that all $K_i$ in \rf{K_i} are invariant (behave as scalars) under 
  reparametrizations of the $x$ coordinate, and the same is true for mixed components of 
  second-rank tensors, including the Ricci tensor $R\mN$ and the Einstein tensor 
  $G\mN= R\mN - \half \delta\mN R$. Thus the space-time regularity can be verified
  using $K_i$ in terms of any radial coordinate $x$.
  
  As follows from \rf{K_i}, regularity at $r=0$ requires not only finite values and 
  smoothness of $\alpha$ and $\gamma$, but also, due to the expression for $K_4$, 
\beq               \label{reg-c}
                     \e^{-2\alpha} r'^2 - 1 = \cO(r^2)\qq {\rm as} \ \ r\to 0.
\eeq  
  It is actually the local flatness condition, requiring a circumference to radius ratio 
  of $2\pi$ for small circles around the center. 
  
  One more important observation follows from \eq \rf{reg-c} for \bh\ space-times, in which
  $A(x)$ can become negative. The condition \rf{reg-c}, rewritten as $A r'^2 -1 = \cO(r^2)$, 
  cannot be satisfied if $A(x) < 0$, which happens in nonstatic regions of \sph\ \bhs\ 
  (also called T-regions) beyond their horizons. We see that {\sl the metric cannot be regular 
  in the limit $r\to 0$ in T-regions of \sph\ \bhs.} A regular center can only occur in a static 
  region where $A>0$.
  
  It also follows from \rf{K_i} that the metric \rf{ds1} is regular at apparent horizons that 
  correspond to regular zeros of the function $A(x)$ under the condition $r(x) >0$.

\medskip\noi  
  {\bf Asymptotics.} For an island-like system, it is natural to assume that the space-time 
  is \asflat, and far from the source of gravity there is an approximately \Scw\ gravitational 
  field characterized by a certain mass $m$. In terms of an arbitrary radial coordinate $x$ 
  it means that in the metric \rf{ds0}, under the appropriate choice of the time scale, 
\beq        \label{asflat1}      
               \e^{2\gamma(x)} = 1 - \frac{2m}{r(x)} + o(1/r)\qq {\rm as} \ \ r \to \infty
\eeq      
  In addition, one should require a correct circumference to radius ratio for large circles
  around the source of gravity, which leads to a condition similar to \rf{reg-c},
\beq        \label{asflat2} 
			 \e^{-2\alpha} r'^2 \to 1  \qq {\rm as} \ \ r \to \infty.
\eeq    
  A limit other than unity in \rf{asflat2} leads to a deficit or excess of the solid angle at infinity, 
  characterizing a global monopole space-time \cite{vil-sh}.
  
  In the presence of a nonzero cosmological constant $\Lambda$, the gravitational field far
  from its island-like source as asymptotically de Sitter (if $\Lambda >0$) or anti-de Sitter
  (if $\Lambda <0$), well described by the metric \rf{ds1} with $r=x$ and 
  $A = 1 - \Lambda r^2/3$.
  
% ===================================
\section{$L(f)$ NED coupled to general relativity. FP duality}\label{sec3}
% ===================================
\subsection{Field equations}
% ------------------------------

  Let us now consider self-gravitating \emag\ fields with the Lagrangian $L(f)$ in the framework
  of GR, so that the total action has the form
\beq            \label{S}
	S = \Half \int \sqrt {-g} d^4 x [R - L (f)], 	
\eeq
  where $R$ is the Ricci scalar, the invariant $f$ has the standard form 
  $f = F\mn F\MN = 2({\bf B}^2 - {\bf E}^2)$, where the 3-vectors $\bf E$ and $\bf B$ are the 
  electric field strength and magnetic induction, and $L(f)$ is an arbitrary function. 
  The \emag\ tensor $F\mn$ obeys the Maxwell-like equations, obtained from \rf{S} by variation 
  with respect to the 4-vector potential $A_\mu$,  and the Bianchi identities  
  for the dual field $\sF\MN$, following from the definition $F\mn = \d_\mu A_\nu - \d_\nu A_\mu$:
\beq                   \label{Max}
			\nabla_\mu (L_f F\MN) = 0, \qq      \nabla_\mu \sF\MN = 0.
\eeq   
  The corresponding SET is given by ($L_f \equiv dL/df$)
\beq             \label{SET1}
             T\mN = -2 L_f F_{\mu\alpha} F^{\nu\alpha} + \half \delta\mN L(f).
\eeq

  Let us assume spherical symmetry, with a metric of the general form \rf{ds0}. The only nonzero 
  components of $F\mn$ compatible with this symmetry are $F_{tr} =- F_{rt}$, representing a 
  radial electric field, and $F_{\theta\phi} = - F_{\phi\theta}$, corresponding to a radial magnetic field.
  From \rf{Max} it follows
\beq              \label{F_mn}
        r^2 \e^{\alpha + \gamma} L_f F^{tr} = q_e, \cm  F_{\theta\phi} = q_m\sin\theta,
\eeq
  where $q_e = \const$\ has the meaning of an electric charge, and $q_m = \const$\ is a magnetic charge. 
  Accordingly, the only nonzero SET components have the form
\bearr        \label{SET2}
        T^t_t = T^r_r = \half L + f_e L_f, \qq
%\nnn  
        T^\theta_\theta = T^\phi_\phi = \half  L - f_m L_f,
\ear
  where  
\bearr                   \label{ff}
	f_e = 2 E^2 = 2F_{tr}F^{rt} = \frac{2q_e^2}{L_f^2 r^4}, \qq
	f_m =2 B^2 = 2F_{\theta\phi}F^{\theta\phi} = \frac {2q_m^2}{r^4},
\ear
  so that $f = f_m - f_e$. Here, $E = |{\bf E}|$ and $B = |{\bf B}|$ are the absolute values of the 
  electric field strength and magnetic induction, measured by an observer at rest in our static space-time.
  
  The SET \rf{SET2} has two important properties $T_t^x =0$ and $T^t_t = T^x_x$. 
  The first one means the absence of radial energy flows, related to the absence of monopole
  electromagnetic radiation. The second one, due to the Einstein equations \rf{EE}, leads to
  $G^t_t = G^x_x$, and this equation is easily integrated if we use the \Scw\ radial coordinate, 
  $x\equiv r$ (see, e.g., \cite{LL}), leading to the relation $\alpha(r) + \gamma(r) = \const$. 
  With a proper choice of the time scale, we have $\alpha + \gamma =0$, and the metric can 
  be rewritten as 
\beq            \label{ds2} 
		ds^2 = A(r) dt^2 - \frac{dr^2}{A(r)} - r^2 d\Omega^2.
\eeq
  The other Einstein equation, $G^t_t = -T^t_t$, then reads
\beq
		A + A'r = 1 - \rho r^2
\eeq
  and can be rewritten in the integral form as
\beq          \label{A}
	A(r) = 1 -\frac{2M(r)}{r}, \qq       M(r) = \frac 12 \int \rho(r) r^2 dr,  
\eeq
  where $\rho(r) \equiv T^t_t$ is the energy density, and $M(r)$ is called the mass function,
  such that $M(\infty)$ is the \Scw\ mass in an \asflat\ space-time.
  It is a solution for $A(r)$ if $\rho(r)$ is known. Note, however, that a complete solution for 
  the system under consideration requires a knowledge of $L(f)$ and both electric 
  and magnetic fields as functions of $r$.

% ------------------------------------
\subsection{FP duality}
% ------------------------------------  
  
  NED with a Largangian function $L(f)$ is known to admit an alternative representation 
  obtained from the original one by a Legendre transformation \cite{Pel-T, sala-87, vag-14}: 
  to this end, the new tensor $P\mn = L_f F\mn$ is defined, with its invariant  
  $p = P\mn P\MN$. Then one considers the Hamiltonian-like quantity 
\beq               \label{H}
	H(p) = 2f L_f  - L = - 2T^t_t 
\eeq
  as a function of $p$. It is possible to use the function $H(p)$ to specify the whole theory. 
  The following relations are valid:
\beq                                 \label{fp}
          L = 2p H_p - H, \qq  L_f H_p = 1, \qq  f = p H_p^2, \qq   p = f L_f^2,
\eeq
  where $H_p \equiv dH/dp$. In terms of $H$ and $P\mn$, the SET reads 
\beq                 \label{SET3}
	 T\mN =  -2H_p P_{\mu\alpha} P^{\nu\alpha} + \delta\mN (p H_p -\half H).
\eeq

  In a \sph\ space-time with the metric \rf{ds2}, \eqs \rf{F_mn} are rewritten in the 
  P framework as
\beq                     \label{P_mn}
                     r^2 P^{tr} = q_e, \cm    H_p P_{\theta\phi} = q_m\sin\theta.                       
\eeq
  Let us also introduce the quantities $p_e$ and $p_m$ quite similar to $f_e$ and $f_m$:  
\beq                   \label{pp}
	p_e = 2 P_{tr}P^{rt} = \frac{2q_e^2}{r^4} \geq 0, \qq  
	p_m = 2 P_{\theta\phi}P^{\theta\phi} = \frac {2q_m^2}{H_p^2 r^4} \geq 0,
\eeq
  so that $p = p_m - p_e$, and then the SET \rf{SET3} is transformed to
\beq        \label{SET4}
        T^t_t = T^r_r = - \half H + p_m H_p, 
\qq
        T^\theta_\theta = T^\phi_\phi = -\half H - p_e H_p.
\eeq
  One can notice that the $F$ and $P$ formulations of the same theory are not always 
  equivalent \cite{k-NED, B-comment}. More precisely, a theory initially specified by $L(f)$
  is equivalently reformulated in the $P$ framework only in a range of $f$ where $f(p)$ is a 
  monotonic function. In any case, the main and physically preferred  formulation is the
  Lagrangian one since it directly follows from the least action principle. Later on we will
  confirm this statement in the discussion of photon motion in magnetic and electric
  solutions with the same metric.

  Now, comparing \rf{SET2} and \rf{SET4}, one can see that they coincide up 
  to the substitution
\beq          \label{dual}
		\{F\mn,\ f,\ L(f)\} \ \  \longleftrightarrow \ \ \{^*P\mn,\ -p,\ -H(p)\},
\eeq
  where $^*P\mn$ is the Hodge dual of $P\mn$, such that $^*P_{\theta\phi}= P_{tx}$.
  As long as the SETs coincide, all possible metrics satisfying the Einstein equations
  \rf{EE} should also coincide. This coincidence was described in \cite{k-NED} for static 
  systems and was named FP duality. In \cite{Mor-03} this kind of duality was extended 
  to general space-times and used for studying the stability of static solutiojns, 
  and in \cite{k-18} it was used while obtaining nonstatic \sph\ solutions to the 
  Einstein-NED equations. 
  
  It should be stressed that the FP duality connects solutions with the same metric but
  belonging to {\it different\/} NED theories. Only in the Maxwell theory, in which 
  $L = f = H = p$, the FP duality is the same as the conventional electric-magnetic duality. 
  
% =====================================
\section {Regular \bhs\ with $L = L(f)$} \label{sec4}
% =====================================
\subsection{Magnetic, electric and dyonic solutions} 
% ----------------------------------------------------------------

  {\bf Magnetic solutions} ($q_e = 0, q_m \ne 0$) can be found most easily. If the Lagrangian 
  $L(f)$ is specified, then, since now $f = 2q_m^2/r^4$, the density $\rho(r) = L/2$ is known 
  according to \rf{SET2}, and the metric function $A(r)$ is found by integration in \rf{A}. 
  
  If, on the contrary, we know $A(r)$ (or choose it by hand), then $\rho = L(f)/2$ is found
  from \rf{A}, leading to 
\beq  
                L(f(r)) = \frac{2}{r^2}[1 - (rA)'],  
\eeq        
   and $L(f)$ is restored since $f = 2q_m^2/r^4$. 
  
\medskip\noi  
  {\bf Electric solutions} ($q_e\ne 0, q_m=0$) can be obtained in quite a similar manner 
  if we use the Hamiltonian-like form of NED, see \eqs \rf{fp}--\rf{SET4}. In this case, 
  $p= -2q_e^2/r^4$, and if we specify $H(p) = -2\rho$, the mass function $M(r)$ is directly
  found, while $A(r)$ is obtained by integration in \rf{A}. If $A(r)$ is specified, 
  then \eq \rf{A} allows for finding $\rho(r) = -H(p)/2$.  
  
  However, if one starts with the Lagrangian $L(f)$ and seeks electric solutions, a separate
  problem is the transition to the $P$ framework, which is equivalent to the $F$ framework 
  only if $f(p)$ is a monotonic function, or only in such ranges of $f$ and $p$ in which 
  $f(p)$ is monotonic. There is also a technical problem of expressing $H$ as a function 
  of $p$ after its obtaining as a function of $f$ according to \rf{H}.
  
  For example, consider the simple rational function \cite{kru-20}
\beq                  \label{L-kru20}
		L(f) = \frac{f}{1 + 2\beta f}, \qq \beta = \const > 0.
\eeq  
  The quantity \rf{H} is easily found,
\beq                  \label{H-kru20}
		H = 2f L_f - L(f) = \frac{f (1- 2\beta f)}{(1 + 2\beta f)^2},
\eeq  
  but finding the dependence $f(p)$ to be substituted to \rf{H-kru20} requires solving
  a fourth-order algebraic equation:
\beq
		f = p (1+2\beta f)^4.
\eeq   
  
  It is therefore not surprising that the numerous existing electric solutions either start 
  from a specific function $H(p)$ or postulate the metric function $A(r)$, as is actually done 
  in \cite{gad-98} and a few other papers by the same authors. 

\medskip\noi  
  {\bf Dyonic solutions} with both nonzero charges $q_e$ and $q_m$ can be obtained with 
  more effort. Neither $f(r)$ nor $p(r)$ is known explicitly now. Thus, in particular,
\beq                                     \label{dy-f}
	 f(r) = \frac{2}{r^4} \biggl(q_m^2 - \frac{q_e^2}{L_f^2}\biggr). 
\eeq
  Comparing the expressions for $\rho(r)$ from \rf{SET2} and from \rf{A}, we can write
\beq    			      \label{dy-M'}
           \Half L(f) + \frac{2q_e^2}{L_f r^4} = \frac{2M'(r)}{r^2} = \rho(r). 
\eeq

  If $L(f)$ is known, \eqn{dy-f} can be treated either (A) as an (in general, transcendental) 
  equation for the function $f(r)$ or (B) as an expression of $r$ as a function of $f$. 

  In case (A), if we can find explicitly $f(r)$, integration of \eqn{dy-M'}
  gives the metric function $A(r)$.

  The scheme (B) gives a solution in quadratures expressed in terms of $f$ that can be now 
  chosen as a new radial coordinate. Indeed, if $L(f)$ and $r(f)$ are known 
  and monotonic, so that $L_f \ne 0$ and $r_f \ne 0$, we can rearrange \eqn{dy-M'} as 
\beq                                   \label{M_f}
               M_f = \frac{r^2 r_f}{2}\biggl[\frac{L}{2} + \frac{q_e^2}{L_f r^4}\biggr]
\eeq  
  (as before, the subscript $f$ denotes $d/df$). Since the r.h.s. of \rf{M_f} is known, 
  we can calculate $M(f)$ and $A(r)$ and also rewrite the metric in terms of the coordinate $f$.
  Thus we obtain {\it a general scheme of finding dyonic solutions} under the above conditions
  \cite{17-dyon}.   	`
  
  As a trivial example of using the scheme (A), we can consider the Maxwell theory, $L=f$. 
  Substituting $L=f$ and $L_f=1$ to \eqn{dy-M'}, we obtain $2M' = (q_e^2+q_m^2)/r^2$, whence 
  $2M(r) = 2m - (q_e^2+q_m^2)/r$ and 
\beq                    \label{dy-RN}
                     A(r) = 1 - \frac{2m}{r} + \frac{q_e^2+q_m^2}{r^2}, \ \ \ m=\const,
\eeq 
  that is, the dyonic Reissner-Nordstr\"om solution, as should be the case. 

  Another example is obtained \cite{17-dyon} if we assume that \eqn{dy-f} is linear in $f$.
  Then we have to put $L_f^{-2} = c_1 f + c_2$ with $c_{1,2} = \const$, which yields after 
  integration $L = L_0 + (2/c_1)\sqrt{c_1f + c_2}$. Assuming a Maxwell behavior, 
  $L \approx f$, at small $f$, we find $c_2 =1$, $L_0=-2/c_1$, and denoting $2/c_1 = b^2$, 
  we arrive at the truncated Born-Infeld Lagrangian,
\beq                  \label{L-BI}
   	L(f) = b^2 \Big(-1 + \sqrt{1+ 2 f/b^2}\Big), \ \ \ b = \const
\eeq
  (the full Born-Infeld Lagrangian also involves the other \emag\ invariant 
  $h^2 = (\sF\mn F\MN)^2$).
  With \rf{L-BI}, we obtain
\bearr              \label{f-BI}
            f(r) = \frac{2b^2 (q_m^2 - q_e^2)}{4q_e^2 + b^2 r^4},
\nnn                   \label{E-BI}
	   \rho (r) = -\frac{b^2}{2} + \biggl(\frac{b^2}{2} + \frac{2q_e^2}{r^4}\biggr)
			\sqrt{\frac{4q_m^2 + b^2 r^4}{4q_e^2 + b^2 r^4}}.
\ear  
  In the special case of a self-dual electromagnetic field, $q_e^2 = q_m^2$, we find simply
  $f=0$ and $\rho (r) = 2q^2/r^4$,  as in the Maxwell theory, and the dyonic 
  solution for $A(r)$ coincides with \rf{dy-RN}. For arbitrary charges, \eqn{A} leads to a
  long expression with the Appel hypergeometric function $F_1$, not to be presented here.
   
  Other examples of dyonic NED-GR solutions are found and discussed in 
  \cite{krugl-19a, krugl-19b, mkr-22, kru-20, yang-22}.
       
% ------------------------------------------------------    
\subsection{Regularity and no-go theorems} 
% ------------------------------------------------------    

{\bf Magnetic solutions.}    
  According to \rf{reg-c}, a regular center requires $A(r) = 1 + \cO(r^2)$ at small $r$. 
  In magnetic solutions with $f = 2q_m^2/r^4 \to \infty$ the metric regularity  
  then requires $L \to L_0 < \infty$ as $f\to\infty$ \cite{k-NED} because the density 
  that should be finite is now $T^t_t = \rho = L/2$. 
  Furthermore, asymptotic flatness requires $A(r) = 1 - 2m/r + o(1/r)$, where $m$ is the 
  Schwarzschild mass. By \rf{A}, it is the case if $\rho \sim r^{-4}$ or smaller as $r\to \infty$,
  which happens if $L(f) \sim f$, i.e., it has a Maxwell asymptotic behavior at small $f$.
  The metric is then approximately \RN\ at large $r$. 
  
  The infinite magnetic induction $B \sim 1/r^2$ at the center might cause a problem, 
  but as discussed in \cite{k-NED}, a correct estimate of the force applied to a charged test particle 
  moving in the nonlinear magnetic field under consideration, obtained along the lines of 
  Refs.\,\cite{rosen-52, ryb-book}, shows that such forces are finite for both electrically and 
  magnetically charged test particles and even vanish at $r=0$. 
  
  Thus invoking a smooth function $L(f)$ such that $L \sim f$ as $f\to 0$ and $L \to L_0 < \infty$
  as $f\to\infty$ is an easy way to obtain globally regular configurations including magnetic \bhs\
  and solitons, used in many papers, probably beginning with Ref. \cite{k-NED}.     

  In all such solutions, a general feature is that $A\to 1$ as both $r\to 0$ and $r\to\infty$.
  Moreover, the mass term $-2m/r$ contributes negatively to $A(r)$ as long as $m > 0$.
  Thus in regular solutions $A(r)$ should inevitably have a minimum, at which the value of 
  $A$ depends on the mass and charge values. Their relationship determines the existence
  of horizons located at regular zeros of $A(r)$. If the mass $m > 0$ is fixed, then at small charges
  (which contribute positively to $A(r)$ at least at large $r$) the minimum of $A$ is negative 
  because the solution is close to Schwarzschild's almost everywhere, and then any regular 
  function $A(r)$ has two zeros, one of which should be close to $r=2m$, while the other emerges 
  since it is necessary to return to $A(r) > 0$ at small $r$ to reach $A=1$ at $r=0$. At large 
  charges, on the contrary, the mass term $-2m/r$ is only significant at large $r$, and a minimum 
  of $A$ should be positive, leading to a solitonic solution. Some value of $q$ must be critical,
  leading to a double zero of $A(r)$, corresponding to a single extremal horizon. 
  
  This general picture is really observed in the known examples of regular \ssph\ NED-GR solutions. 
  Let us illustrate it with the behavior of $A(r)$ in the example from \cite{k-NED}, where
\beq                              \label{L-ch}
				L(f) = \frac{f}{\cosh^2 \big(b |f/2|^{1/4}\big)}, \qq b = \const > 0.
\eeq  
   In the magnetic solution, with $q= q_m >0$ (for simplicity),
\beq                             \label{A-ch}
				\rho = \frac {q^2/r^4}{\cosh^2 (b\sqrt{q}/r)}, \qq
				A(r) = 1 - \frac{2m}{r} \bigg(1 - \tanh \frac{q^2}{2mr}\bigg),
\eeq   
  where the mass $m$ is determined as $M(\infty)$.
  The behavior of $A(r)$ is shown in Fig.\,1 for three values of $q/m$ leading to qualitatively different
  geometries. The causal structures and Carter-Penrose diagrams of these space-times are the same 
  as those for \RN\ ones, but the important difference is that now the lines $r=0$ denote 
  a regular center instead of a singularity. 
  
  Regular models with more than two horizons are also possible, see, e.g., \cite{od-17, gao-21} 
  for detailed studies of such solutions.
  
% --------------------------- fig 1
\begin{figure}
\sidecaption
%\centering
\includegraphics[width=7cm]{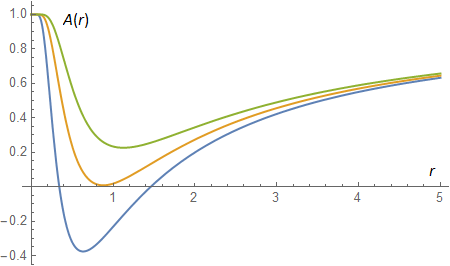} 
\caption{The behavior of $A(r)$ according to \eqn {A-ch} with $m =1$ and 
		$q =0.9,\ 1.06,\ 1.2$ (bottom-up). }
		\label{fig1}
\end{figure}  
% ---------------------------- 

  An important feature of regular solutions is that with given $L(f)$ the \Scw\ mass $m$ is uniquely 
  fixed by the charge $q$. Indeed, to obtain a regular center, the integration in \eqn{A} must be 
  carried out from $r=0$ (where the density $\rho$ is finite and determined by $q$) to arbitrary $r$, 
  resulting in the \Scw\ mass $m = M(\infty)$. It means that this mass is completely created by the 
  \emag\ field energy. Any additional mass $m_1$ that can appear in the solution as an integration 
  constant in \eqn{A} would add the singular term $2m_1/r$ to $A(r)$.
  
  Thus, in particular, returning to the solution \rf{A-ch} for the theory \rf{L-ch}, it is easy to find 
  that $m = q^{3/2}/(2b^{1/4})$, or on the contrary, the parameter $b$ in $L(f)$ may be 
  expressed in terms of $m$ and $q$: $b = q^6/(16 m^4)$.  
  
\medskip\noi
{\bf Electric solutions} with a regular center and a \RN\ asymptotic behavior can either be found in
  the same manner using the $P$ formulation of NED (as is done in Refs. \cite{gad-98, FW-16} 
  and many others), or obtained directly from the magnetic ones using the FP duality. 
  However, as we saw above, solutions with the same metric correspond to quite different NED 
  theories than those used in magnetic solutions, and this circumstance leads to their different
  physical properties. First of all, let us recall a theorem proved in \cite{B-Shi, B-Shi2, k-NED}:
%  =========  
\begin{theorem}    \label{theo1}
  If a \ssph\ electric solution ($q_e\ne 0, q_m =0$) to the $L(f)$ NED-Einstein equation describes 
  a space-time with a regular center, it cannot have a Maxwell behavior at small $f$
  ($L \approx f, \ L_f \to 1$).
\end{theorem}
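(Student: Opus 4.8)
The plan is to argue throughout in the \Scw\ radial coordinate $x=r$, in which (by the discussion leading to \rf{ds2}) $\alpha+\gamma=0$ and $r'=1$, $r''=0$, so that the invariants \rf{K_i} reduce to $K_1=\Half A''$, $K_2=-K_3=A'/(2r)$ and $K_4=(1-A)/r^2$. Regularity at $r=0$ makes all $K_i$ finite there, which in particular requires $A(r)=1+\cO(r^2)$ (text after \rf{reg-c}); the only consequence I shall use is that, since for such \ssph\ metrics the mixed components of $R\mN$ and $G\mN$ --- hence, via the Einstein equations \rf{EE}, of $T\mN$ --- are scalar combinations of the $K_i$, finiteness of all $K_i$ at $r=0$ makes $T^t_t$ and $T^\theta_\theta$ finite as $r\to0$.

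I would then specialise the field relations to the purely electric case $q_m=0$, $q_e\ne0$. From \rf{ff} with $f_m=0$,
\beq                       \label{pfel1}
     f \;=\; f_m - f_e \;=\; -\,\frac{2q_e^2}{L_f^2\, r^4}\ \leq\ 0 ,
\eeq
while \rf{SET2} gives
\beq                       \label{pfel2}
     T^t_t - T^\theta_\theta \;=\; (f_e+f_m)\, L_f \;=\; f_e L_f \;=\; \frac{2q_e^2}{r^4 L_f},
     \qq\text{so that}\qq    f \;=\; -\,\frac{T^t_t - T^\theta_\theta}{L_f}.
\eeq
Since $T^t_t-T^\theta_\theta$ stays finite as $r\to0$ and $q_e\ne0$, the middle equality in \rf{pfel2} forces $|L_f|\ge c\,r^{-4}$ near the centre for some $c>0$, hence $|L_f|\to\infty$ as $r\to0$; inserting this into the last equality of \rf{pfel2}, whose numerator is bounded, gives $f\to0^-$ as $r\to0$. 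Thus a regular centre forces $f\to0^-$ together with $|L_f|\to\infty$.

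This collides with a Maxwell weak-field limit, which by definition requires $L_f\to1$ as $f\to0$: a Lagrangian function cannot have its derivative tend simultaneously to $1$ and to $\infty$ along sequences with $f\to0^-$. Nor is there any escape by claiming that the centre ``sees a different part of $L$'': by \rf{pfel1} an electric solution has $f\le0$ everywhere, so the far zone --- where asymptotic flatness with a \RN\ tail already forces $f\to0^-$ with $L_f\to1$ --- and the centre probe the very same one-sided neighbourhood $f\to0^-$. This contradiction proves the theorem. Equivalently, one may argue in the $P$ framework, where $p=-p_e=-2q_e^2/r^4$ is explicit and $\rho=-\Half H(p)$: finiteness of $T^t_t=-\Half H$ and $T^\theta_\theta=-\Half H-p_eH_p$ at the centre makes $H$ and $p_eH_p$ bounded, i.e.\ $H_p=\cO(r^4)\to0$, so $L_f=1/H_p\to\infty$ and $f=pH_p^2\to0$ --- the same collision with $L_f\to1$.

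The main obstacle is to make the step ``regularity $\Rightarrow$ $f\to0$ and $|L_f|\to\infty$ at the centre'' airtight. One must (i) exclude $L_f=0$ wherever the electrostatic equation \rf{F_mn} is used, so that the divisions in \rf{pfel1}--\rf{pfel2} are legitimate, and (ii) keep in mind that $L(f)$ need not be single-valued or $C^1$ along the whole solution --- but this is exactly the phenomenon the theorem detects: the branch of $L$ that the solution picks out near the centre, on which $L_f\to\infty$ as $f\to0^-$, cannot be continued to a Maxwell branch, on which $L_f\to1$ as $f\to0^-$. Everything else (the reduction of \rf{K_i}, the identities \rf{pfel1}--\rf{pfel2}) is routine.
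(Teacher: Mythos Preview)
Your proof is correct and follows essentially the same route as the paper's: from regularity one infers that $T^t_t-T^\theta_\theta=f_eL_f=2q_e^2/(r^4L_f)$ stays bounded at $r=0$, which forces $|L_f|\to\infty$ and then $f=-f_e\to0$, contradicting $L_f\to1$. Your added remarks (the one-sided nature of $f\to0^-$, the $P$-framework rephrasing, the caveat about $L_f\ne0$) are sound elaborations but not needed for the theorem as stated; the paper treats the branching phenomenon separately after the proof.
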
    

\begin{proof}  
  To begin with, since the Ricci tensor for our metric is diagonal, the curvature
  invariant $R\mn R\MN = R\mN R\nM$ is a sum of squares of the components $R\mN$, 
  hence each of them taken separately must be finite at any regular point, including a center. 
  It then follows that each of the components of $T\mN$ should be finite, as well as their 
  any linear combination. In particular, by \rf{SET2}, we must have $|f_e L_f|  < \infty$.
  But according to \rf{ff}, $f_e L_f^2 = 2q_e^2/r^4 \to \infty$. These two conditions, 
  taken together, lead to
\beq                   \label{L_fe}
			f = -f_e \to 0, \qq   L_f \to \infty \qq {\rm as}\ \ r\to 0.
\eeq    
  It means that we have a non-Maxwell function $L(f)$ at small $f$.
\qed  
\end{proof}  
% ==========
  
  On the other hand, regular \asflat\ electric solutions obtained in the $P$ formulation of NED
  have a correct Maxwell asymptotic behavior. How can it be combined with \rf{L_fe}?
  
  The answer is that such solutions correspond to different Lagrangians $L(f)$ near $r =0$ 
  and at large $r$ \cite{k-NED, B-comment}. Indeed, at a regular center $r=0$ we have
  $-p = 2q_e^2/r^4 \to \infty$ and $f =0$, while at flat infinity both $p\to 0$ and again $f\to 0$. 
  It means that $f$ inevitably has at least one extremum at some $p=p^*$, breaking the 
  monotonicity of $f(p)$, which means that on different sides of $p^*$we have different 
  functions $L(f)$ corresponding to the same $H(p)$. As shown in \cite{k-NED}, at an extremum 
  of $f(p)$ the function $L(f)$ suffers branching, at which the derivative $L_f$ tends to the same 
  finite limit as $p\to p^*+0$ and $p\to p^*-0$, while $L_{ff}$ tends to infinities of opposite signs. 
  This corresponds to a cusp in the plot of $L(f)$. Another form of branching of $L(f)$ takes
  place at extremum points of $H(p)$, if any, where the monotonicity of $f(p)$ also breaks 
  down. The number of different Lagrangians $L(f)$ on the way from the center to infinity 
  is equal to the number of monotonicity ranges of $f(p)$ \cite{k-NED}. 
  
  To illustrate this unusual behavior of $L(f)$ let us use as an example the same metric function 
  \rf{A-ch}, where now $q = q_e$, as a solution corresponding to $H(p)$ dual to \rf{L-ch}:
\beq                              \label{H-ch}
				H(p) = -\frac{p}{\cosh^2 \big(b |p/2|^{1/4}\big)}, \qq b = \const > 0.
\eeq    
  Calculations reveal the behavior of the corresponding functions $f(p)$ and $L(f)$ 
  shown in Figs.\,2, 3. It turns out that $L(f)$ has as many as four branches, in other words,
  there are four NED theories acting in different parts of space. 
  
% --------------------------- fig 2
\begin{figure}
\centering
\includegraphics[width=110mm]{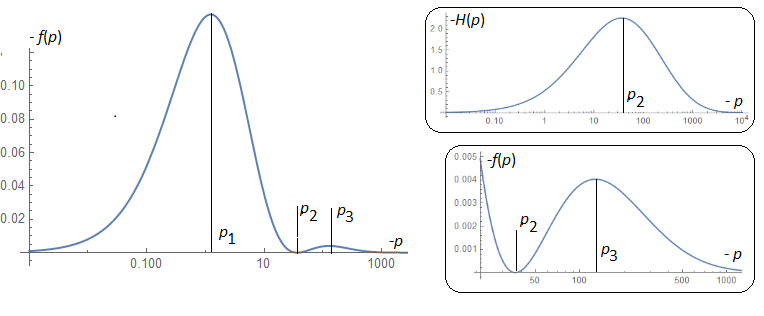}
\caption{\small
		The function $f(p)$ obtained from $H(p)$ given by  \eqn {H-ch} with $b =1$. 
		The points $p_1$ and $p_3$ show the maxima of $|f(p)|$ while $p_2$ shows its
		minimum corresponding to the maximum of $|H(p)|$. The upper inset shows 
		the function $H(p)$, while the lower one is an enlarged view of the neighborhood 
		of $p_2$ and $p_3$ in the plot of $f(p)$.}
		\label{fig2}
\end{figure}  
% --------------------------- fig 3
\begin{figure}
\centering
\includegraphics[width=10cm]{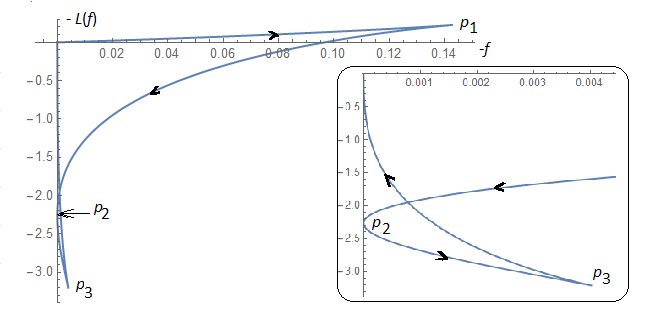} 
\caption{\small
		The behavior of $L(f)$ in the electric solution for $H(p)$ from \eqn {H-ch} 
		with $b =1$. The points $p_1, p_2, p_3$ correspond to the extrema of $f(p)$,
		at which the function $L(f)$ passes on from one branch to another.
		The inset shows more clearly the range close to $p_2$ and $p_3$.
		Arrows on the curves show the direction of growing $|p|$.  }
		\label{fig3}
\end{figure}  
% ----------------------------   
  
\medskip\noi
{\bf Dyonic solutions.} If there are both nonzero $q_e$ and $q_m$, then a 
   combination of $T\mN$ components leads to the requirement
\beq
			(f_e + f_m) |L_f| < \infty
\eeq     
  that must hold at any regular point, including a regular center. Moreover, it must 
  hold for each term separately because both $f_e$ and $f_m$ are positive. 
  Applying it to $f_e$, we obtain, as before, that $L_f  \to \infty$ at a regular 
  center \cite{k-NED}.  However, the inequality $f_m|L_f| < \infty$ leads to 
  the requirement $L_f \to 0$, since $f_m = 2q_m^2/r^4 \to \infty$ as $r\to 0$.
  We arrive at a contradiction that leads to the general result:
% =========  
\begin{theorem}      \label{theo2}
  {\sl Static \sph\ dyonic solutions to the NED-Einstein equations with arbitrary $L(f)$ 
  cannot describe space-times with a regular center.}
\end {theorem}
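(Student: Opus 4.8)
The plan is to follow the template of the proof of Theorem \ref{theo1}, now squeezing both charges at once. First I would use regularity exactly as there: for the metric \eqref{ds2} the Ricci tensor $R\mN$ is diagonal, so the algebraic curvature invariant $R\mn R\MN = R\mN R\nM$ is a sum of squares of its mixed components; finiteness of this invariant at a regular center therefore forces every $R\mN$, and by the Einstein equations \eqref{EE} every component of the SET \eqref{SET2}, to be finite there. In particular both $T^t_t = \half L + f_e L_f$ and $T^\theta_\theta = \half L - f_m L_f$ are finite as $r\to 0$, and hence so is their difference $(f_e + f_m) L_f$.

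Next I would split this single condition into two. Since $L_f$ is nowhere zero (by \eqref{F_mn}, $L_f = 0$ would entail $q_e = 0$) and $f_e, f_m \ge 0$ by \eqref{ff}, one has $|(f_e + f_m)L_f| = f_e|L_f| + f_m|L_f|$; finiteness of this sum of nonnegative terms forces $f_e|L_f|$ and $f_m|L_f|$ to be finite separately near the center. Substituting the explicit expressions from \eqref{ff}: because $q_m \ne 0$, $f_m = 2q_m^2/r^4 \to \infty$, so finiteness of $f_m|L_f|$ requires $|L_f| = \cO(r^4) \to 0$; because $q_e \ne 0$, $f_e L_f = 2q_e^2/(L_f r^4)$, and finiteness of this quantity requires $|L_f|\,r^4$ to stay bounded away from zero, i.e. $|L_f| \to \infty$. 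These two demands on $L_f$ are contradictory, which proves that no dyonic ($q_e \ne 0,\ q_m \ne 0$) solution of the $L(f)$ NED-Einstein equations can possess a regular center.

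I do not anticipate a genuine obstacle: the argument is entirely local and algebraic, and every ingredient --- diagonality of $R\mN$, the SET form \eqref{SET2}, and the field-equation consequences \eqref{ff} --- is already in place from Sections \ref{sec2} and \ref{sec3}. The only point deserving a sentence of care is the passage from finiteness of $(f_e + f_m)L_f$ to finiteness of each term, which hinges on $f_e, f_m \ge 0$ (no cancellation is possible). It is also worth remarking, as a contrast with the magnetic case, that this proof does not invoke the local-flatness condition \eqref{reg-c} at all: finiteness of the single invariant $R\mn R\MN$ already suffices to kill the dyonic case.
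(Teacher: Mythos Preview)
Your proof is correct and follows essentially the same approach as the paper's own argument: from regularity you extract $|(f_e+f_m)L_f|<\infty$, split it using positivity of $f_e$ and $f_m$, and derive the contradictory requirements $L_f\to 0$ (from the magnetic term) and $|L_f|\to\infty$ (from the electric term). Your write-up is slightly more explicit than the paper's in justifying the splitting step and in the final remark about not needing \eqref{reg-c}, but the logic is identical.
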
  
% =========

\noi  
{\bf Inclusion of a cosmological constant.}
  If $\Lambda \ne 0$, asymptotically (A)dS solutions \cite{Mat-09} 
  are obtained by simply adding $ - \Lambda r^2/3$ to $A(r)$ in \rf{A}. 
  This new term does not affect the properties of the solutions near $r=0$,
  therefore, all conclusions on the existence of a regular center and the 
  necessary conditions for it, obtained with $\Lambda =0$, remain valid  
  with $\Lambda \ne 0$, although the latter drastically changes the global 
  properties of space-time.
  
% ---------------------------------------------  
\subsection{Causality and unitarity}
% ---------------------------------------------
  
   An important viability criterion for NED theories has been suggested by A. Shabad and V. Usov 
   \cite{usov-11}, partly on the basis of their previous work: they have used (i) the causality principle as 
   the requirement that elementary excitations over a background field should not 
   have a group velocity exceeding the speed of light in vacuum and (ii) the unitarity principle formulated 
   as the requirement that the residue of the propagator should not be negative. As a result, there emerge 
   the following inequalities that should hold for a theory satisfying these principles: in our notations,
   for $L(f)$ theories,
\beq                             \label{causal}
		L_f > 0, \qq      L_{ff} \leq 0, \qq         \Phi =L_f + 2f L_{ff} \geq 0. 
\eeq    
  One can notice that the third condition can be rewritten as $H_f \geq 0$, with 
  the Hamiltonian-like quantity $H$ given by \rf{H}, but does not directly concern the derivative 
  $H_p$ due to a possible complexity in the dependence $f(p)$. The quantity $\Phi$ also plays
  an important role in the effective metric for photon propagation and in the stability conditions, 
  to be considered in the next subsections.

  One immediate observation can be made about magnetic solutions with a regular center, 
  both \bh\ and solitonic ones:
% =========  
\begin{theorem}   \label{theo3}
       In \ssph\ magnetic solutions to $L(f)$ NED-Einstein equations, the causality and unitarity 
       conditions \rf{causal} are inevitably violated in a neighborhood of a regular center.
\end{theorem}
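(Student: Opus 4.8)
The plan is to push everything to the single limit $f\to\infty$. In a purely magnetic configuration one has $f=f_m=2q_m^2/r^4$ (by \rf{ff} with $q_e=0$), so that the regular center $r\to 0$ is exactly the regime $f\to\infty$; and, as recalled above, regularity there forces the density $\rho=T^t_t=\half L$ (see \rf{SET2} with $q_e=0$) to stay finite, so $L(f)$ tends to a finite limit $L_0$ and in particular is bounded for large $f$. The theorem is then the statement that the three inequalities \rf{causal} cannot hold simultaneously on a whole neighborhood of the center, i.e.\ on any interval $f\ge f_*$.

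First I would argue by contradiction: suppose $L_f>0$, $L_{ff}\le 0$ and $\Phi=L_f+2fL_{ff}\ge 0$ all hold for $f\ge f_*$, choosing $f_*$ large enough that also $L_f(f_*)>0$ (permitted by the first condition). The one computation doing the work is the identity
\[
   \frac{d}{df}\!\left(\sqrt f\,L_f\right)=\frac{1}{2\sqrt f}\left(L_f+2fL_{ff}\right)=\frac{\Phi}{2\sqrt f}\ \ge\ 0 ,
\]
so $\sqrt f\,L_f$ is non-decreasing on $[f_*,\infty)$. Hence $\sqrt f\,L_f\ge\sqrt{f_*}\,L_f(f_*)=:c>0$ there, i.e.\ $L_f(f)\ge c/\sqrt f$ for all $f\ge f_*$. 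Integrating this bound gives
\[
   L(f)\ \ge\ L(f_*)+\int_{f_*}^{f}\frac{c}{\sqrt{f'}}\,df'\ =\ L(f_*)+2c\bigl(\sqrt f-\sqrt{f_*}\bigr)\ \toas_{f\to\infty}\ +\infty ,
\]
contradicting the boundedness of $L$ demanded by regularity of the center. Therefore no neighborhood of $r=0$ can satisfy all of \rf{causal}; equivalently, arbitrarily close to the center at least one of these conditions fails. Since the computation actually uses only $L_f>0$ and $\Phi\ge 0$, and a monotone concave approach of $L$ to $L_0$ from below ($L_f>0$, $L_{ff}\le 0$) is by itself perfectly compatible with a regular center, the condition that is genuinely obstructed is the third one: $\Phi=L_f+2fL_{ff}$ must turn negative — i.e.\ $L_{ff}$ becomes ``too negative'' — somewhere in every neighborhood of the center (note also $\Phi=H_f$, so this is the failure of $H$ to be monotone in $f$).

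I expect the main subtlety to be bookkeeping rather than analysis: one should be clear that the conclusion is ``\rf{causal} cannot hold throughout a neighborhood'' and not ``$\Phi<0$ pointwise near $r=0$'', which is why the argument is phrased on intervals $f\ge f_*$ instead of as a pointwise estimate; and one should state explicitly the regularity input being used, namely that $L$ stays bounded as $f\to\infty$, together with enough differentiability of $L(f)$ for $L_f$, $L_{ff}$ and $\Phi$ to be defined where needed. The analytic content is then just the monotonicity of $\sqrt f\,L_f$ and the divergence of $\int^\infty f^{-1/2}\,df$.
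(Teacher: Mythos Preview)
Your argument is correct and follows essentially the same route as the paper: both hinge on the identity $\Phi = 2\sqrt{f}\,(\sqrt{f}\,L_f)_f$, using it to show that $L_f>0$ together with $\Phi\ge 0$ on a tail $f\ge f_*$ is incompatible with $L$ staying bounded as $f\to\infty$. Your contradiction version (bounding $\sqrt{f}\,L_f$ below and integrating) is in fact a bit cleaner than the paper's direct statement that convergence of $\int L_f\,df$ forces $L_f\ll 1/f$, which as a pointwise assertion is not literally justified without an extra monotonicity hypothesis.
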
   
\begin{proof}  
  A regular center requires a finite limit of $L(f)$ as $f\to\infty$. If $L_f > 0$ (as required by the 
  first inequality in \rf{causal}), then, to have a convergent integral $L = \int L_f df$, one has to 
  require $L_f \ll 1/f$ at large $f$, hence the quantity $L_f \sqrt{f}$ is decreasing as $f\to \infty$,
  and its derivative in $f$ is negative. On the other hand, we can write 
  $\Phi = 2\sqrt{f}(L_f\sqrt{f})_f$, consequently, $\Phi < 0$ at large $f$, so that the
  first and third inequalities in \rf{causal} cannot hold simultaneously. 
\qed
\end{proof}
% =========

% -------------------------------------------------------------
\subsection{Light propagation and the effective metric}
% -------------------------------------------------------------

  As we have seen, the same regular metric of the form \rf{ds2} can be obtained 
  with two kinds of sources, the electric and magnetic ones, described by different NED
  theories. It is thus natural to expect that the properties of \emag\ fields will also be
  different in these two cases. Let us try to explore these differences using the effective 
  metric formalism developed by M. Novello et al. \cite{nov-1} while studying the 
  propagation of electromagnetic field discontinuities using Hadamard's approach
  \cite{hadamard}. According to \cite{nov-1, nov-2}, photons governed by NED 
  propagate along null geodesics of the effective metric
\beq                                                           \label{geff}
        h\MN =  g\MN  L_f - 4 L_{ff} F^\mu{}_\alpha F^{\alpha\nu}.
\eeq

\noi
{\bf Electric solutions.}
  In the case of a purely electric field in the metric \rf{ds2}, $h\MN$ is diagonal, and we
  can write the effective metric as
\bearr      \label{ds-ee}
        ds^2_{\rm eff} \equiv  h\mn dx^\mu dx^\nu
	 = \frac{1}{\Phi} \bigg[ A(r) dt^2 - \frac{dr^2}{A(r)} \bigg]
			 - \frac{r^2}{ L_f}d\Omega^2, 
\nnn
     	   \Phi = L_f + 2f L_{ff} = \frac{H_p}{f_p}.
\ear

  Consider the behavior of $h\mn$ at branching points of $L(f)$ that are inevitable in 
  solutions with regular $g\mn$. At an extremum $p=p^*$ of $f(p)$ at which $f\ne 0$ 
  (like points $p_1$ and $p_3$ in Fig.\,2), we have $\Phi \to \infty$ since $f_p =  0$ while
  $H_p$ is finite. This results in a curvature singularity of the effective metric
  due to blowing up of the quantity $K_1$ in \rf{K_i}.

  Another kind of singularity of the metric \rf{ds-ee} occurs at extrema of $H(P)$,
  those like point $p_2$ in Fig.\,2: in this case, generically, $\Phi$ is finite but
  $L_f \to \infty$, which leads to a singular center in the auxiliary space-time 
  with the metric $h\mn$ due to $h_{\theta\theta}\to 0$.
 
  The changes in photon frequencies at their motion in space-time can be evaluated 
  as outlined in \cite{nov-2}. Thus, if an emitter at rest at point $X$ sends a photon 
  with frequency $\nu_X$, it comes to a receiver at rest at point $Y$ with frequency
  $\nu_Y$ related to $\nu_X$ by
\beq               \label{z}
    \frac{\nu_Y}{\nu_X} = \bigg[\frac{\sqrt{g_{tt}}}{h_{tt}}\bigg]_Y 
						\bigg[\frac{\sqrt{g_{tt}}}{h_{tt}}\bigg]_X^{-1}
        	= \bigg[\frac{\Phi}{\sqrt{A}}\bigg]_Y 
			   \bigg[\frac{\Phi}{\sqrt{A}}\bigg]_X^{-1},
\eeq
  where the second equality sign corresponds to the metric (\ref{ds-ee}).
  If $X$ is a regular point while $Y$ is located at an inevitable branching point of 
  $L(f)$ (like $p_1$ or $p_3$), then any photon arriving there is infinitely blueshifted,
  gaining an unlimited energy, which thus implies instability of the whole configuration.
  
  The above reasoning used the assumption $A >0$. In \bh\ solutions, the sphere 
  where $L_f =0$ may be located beyond the event horizon, where $A < 0$.
  In such a region, also called a T-region, $r$ is a temporal coordinate, $t$ is a spatial
  one, and in the redshift relation \rf{z} we must replace $g_{tt}$ with $g_{rr}$, or
  more specifically, $\sqrt{A}$ with $1/\sqrt{-A}$. However, as long as $A$ is finite,
  this replacement does not affect the conclusion on an infinite blueshift on the sphere    
  where $\Phi = \infty$.
  
\medskip\noi
{\bf Magnetic solutions.}
   For the same metric $g\mn$ with a magnetic source, we get, instead of (\ref{ds-ee}), 
\beq
        ds^2_{\rm eff} =                                     \label{ds-em}
    \frac{1}{ L_f}\biggl[ A(r) dt^2 - \frac{dr^2}{A(r)} \biggr] - \frac{r^2}{\Phi}d\Omega^2,
\eeq
  where, as before, $\Phi  =  L_f + 2f L_{ff}$. Then, for a photon traveling from 
  point $X$ to point $Y$, we find instead of \rf{z}:
\beq
    \frac{\nu_Y}{\nu_X}                                    	\label{zm}
        = \bigg[\frac{ L_f}{\sqrt{A}}\bigg]_Y \bigg[\frac{L_f}{\sqrt{A}}\bigg]_X^{-1}.
\eeq
  Now $L(f)$ has no branching points, while at a regular center ($r=0,\ A=1$) 
  both $L_F$ and $\Phi$ vanish, and the quantity $h_{22}\to \infty$, i.e., the 
  spherical radius in the effective metric behaves as if in a wormhole, 
  whereas $h_{tt} \to \infty$, which means an infinite redshift for photons. 
  Also, all curvature invariants of the metric (\ref{ds-em}) vanish at $r=0$. 
  It is really a quiet place.

  There still occurs something of interest between spatial infinity and the center 
  of a regular magnetic model: there is necessarily a sphere $r=r^*$ on which $\Phi=0$. 
  Indeed, $\Phi$ can be presented as $\Phi = 2\sqrt{f} (\sqrt{f} L_f)_f$.  
  The quantity $\sqrt{f} L_f$ tends to zero both at $r=0$ (where $f\to \infty$
  but $L\to\const$) and in the limit $r\to\infty$. Since $\sqrt{f} L_f$ is in general 
  nonzero, it has at least one extremum at some $f \ne 0$, thus it is the value where 
  $\Phi=0$. The metric (\ref{ds-em}) is singular there due to $h_{22}\to\infty$, 
  but this singularity seems to be unnoticed by the photons, as follows from 
  an integral of their geodesic equation
\beq
     	L_f^{-2} \dot r{}^2  + [A(r)\Phi/r^2] \ell^2 = \ep^2,   \label{geo}
\eeq
  where the overdot denotes a derivative in an affine parameter, 
  $\ep$ and $\ell$ are the photon's constants of motion characterizing
  its initial energy and angular momentum. Generically we have $L_f\ne 0$ at points 
  where $\Phi=0$, therefore the photon frequency remains finite. However, as we 
  will see below, the photon velocities behave there in an unusual manner.
  
  If $ L_f=0$ at some value of $f > 0$, it leads to another kind of singularity of 
  the metric \rf{ds-em}), and this time it acts for NED photons as a potential wall, or 
  a mirror, as is evident from (\ref{geo}) which then implies $\dot r =0$. Also, 
  from \eqn{zm} it follows that the photons are infinitely redshifted there: $\nu_Y$ 
  vanishes if $ L_f (Y)=0$. It means that in such a case no photon from outside can 
  approach the center.

  We thus observe a striking difference between the properties of photons moving
  in the same regular metric \rf{ds2} in the cases where it is sourced by electric
  and magnetic fields. In the electric case, the photons inevitably ``accelerate'' 
  to an infinite energy and destabilize the whole system, whereas in the magnetic 
  case, even if they can approach the regular center (if $L_f \ne 0$, hence no mirror),
  they lose energy, being infinitely redshifted there.
  
  The violent behavior of photons in electric regular \bhs\ was discovered by
  Novello et al. \cite{nov-2} for a particular example of such a configuration. 
  As shown in \cite{k-NED}, it is quite a general property of NED-GR solutions. 
  
\medskip\noi
{\bf Photon velocities.}  
  A question of interest is the velocity of NED photons in regular or singular space-times.
  From \rf{ds-em} it follows that radially moving photons have the same velocity  
  equal to $c$ (=1) as the conventional Maxwell ones since the 2D metric of the ($t, r$)
  subspace in the effective metric \rf{ds-em} is conformal to that in the space-time 
  metric \rf{ds2}, and their 1D light cones coincide. This is true for both electric 
  and magnetic solutions. However, the situation is different for nonradial photon paths. 
  
  Consider a photon moving instantaneously in a tangential direction. Without loss of 
  generality we can suppose that it moves along an equator of certain radius 
  $r$ in our coordinate system. For the corresponding null direction in terms of the
  effective metric we have $ds^2_{\rm eff} = h_{tt}dt^2 - h_{\theta\theta} d\theta^2 =0$, and
  for the photon's linear velocity $v_{\rm ph} = r \,d\theta/dt$ we obtain:
\bear            \label{v_ph}
	\text{in an electric solution:}	&&    v_{\rm ph}^2 = {A L_f}/{\Phi},
\nnv
	\text{in a magnetic solution:}	&&    v_{\rm ph}^2 = {A \Phi}/{L_f},
\ear
  For the Maxwell field, $L_f \equiv \Phi \equiv 1$, hence $ v_{\rm ph}^2 = A$, and it would be
  equal to unity if we used the local time increment $dt_{\rm local} = \sqrt{A} dt$ instead of the 
  coordinate time increment $dt$, and the length element equal to $dr/\sqrt{A}$ instead of $dr$. 
  Thus, as should be the case, Maxwell photons always travel in vacuum with the speed of light. 
  The factor $L_f/\Phi$ or $\Phi/L_f$ changes the photons' velocity, working like a refractive index. 
  
  In particular, in electric solutions, at cusplike branching points where $\Phi \to \infty$ 
  while $L_f$ remains finite (like points $p_1$ and $p_3$ in Figs.\,\ref{fig2} and \ref{fig3}), 
  $v_{\rm ph} \to 0$, in other words, tangentially moving photons have zero velocity
  at this value of $r$. On the contrary, at branching points like $p_2$, where $H_p = f_p = 0$ 
  and $\Phi$ is finite but $L_f \to \infty$, we obtain $v_{\rm ph} \to \infty$.  

  In magnetic solutions, at spheres where $\Phi =0$ while $L_f$ is finite, we have
  again $v_{\rm ph} =0$, a zero velocity of tangentially moving photons. 

  So far we were assuming $A(r) > 0$, while in \bh\ space-times there are T-regions where  
  $A(r)$ is negative. However, the only change in \eqn{v_ph} emerging in a T-region 
  is the simple replacement $A \to 1/|A|$ because $r$ is there a time coordinate instead 
  of $t$, and in other respects our reasoning remains unaltered. 

  At intermediate directions between the radial and tangential ones, the NED photon 
  velocities will obviously have intermediate values. We conclude altogether that these velocities 
  can be both subluminal and superluminal, varying from zero to infinity. 
  
  We also observe that the conditions under which superluminal photon velocities are avoided
  ($L_f/\Phi \leq 1$ for electric solutions and $\Phi/L_f \leq 1$ for magnetic ones) do not coincide 
  with the causality/unitarity conditions \rf{causal}. Even more than that: any non-Maxwell
  NED, in which $\Phi/L_f \not\equiv 1$, predicts superluminal photon motion in either 
  electric or magnetic space-times.  Actually, this observation puts to doubt either any NED
  theory or the described straightforward interpretation of the effective metrics. 

  Also, the nonlinearity of NED is acting like a highly anisotropic medium, which, if one takes 
  into account the wave properties of photons, naturally leads to such a phenomenon 
  as birefringence, see the relevant recent studies in \cite{krug-15, sangpyo-22} and
  references therein. 
  
% ---------------------------------------------  
\subsection{Dynamic stability}
% ---------------------------------------------
 
 Any static or stationary configuration may be regarded viable if it is stable under different kinds
 of perturbations, which always exist in nature, or at least if it decays slowly enough. Possible regular 
 NED \bhs\ do not make an exception, and their stability is discussed in a number of 
 papers, e.g., \cite{Mor-03, breton-14, tosh-19, soda-20}, see also references therein.
 
 C. Moreno and O. Sarbach \cite{Mor-03} have derived sufficient conditions for linear dynamic stability
 of the domain of outer communication of electric or magnetic \bhs\ sourced by a general $L(f)$ NED.
 For magnetic \bhs\ these conditions read (in the present notations)
\bearr                  		\label{stab1}
		L>0, \qq  L_y >0, \qq  L_{yy} >0,  
\yyy                  		\label{stab2}
		3 L_y - A(r) y L_{yy} \geq 0, 		
\ear 
 where $y: = \sqrt{q^2 f/2} = q^2/r^2$, and the index ``$y$'' stands for $d/dy$. In terms of 
 $f$ these conditions are rewritten as
\bearr                  		\label{stab3}
		L>0, \qq  L_f >0, \qq  \Phi \equiv L_f + 2f L_{ff} >0,  
\yyy                  		\label{stab4}
		[6 - A(r)] L_f - 2 f L_{ff} \geq 0. 		
\ear 
  One can notice that the conditions \rf{stab3} partly coincide with the causality and unitarity 
  conditions \rf{causal}. Moreover, if $L_f > 0$ and also the condition $L_{ff} \leq 0$ from 
  \rf{causal} is valid, then the condition \rf{stab4} holds automatically provided $f >0$ (which 
  is true for magnetic solutions) and $A(r) < 6$ (we can note that at least in regular \bh\ solutions, 
  in general, $A(r) \leq 1$).
   
  Thus \eqn{stab4} is not expected to make a problem, at least for regular magnetic solutions. 
  Unlike that, by Theorem \ref{theo3}, the condition $\Phi > 0$ is always violated for such solutions
  near a regular center. This may be important for \bh\ solutions only if the range of $r$ where 
  $\Phi < 0$ extends to the domain of outer communication, which must be checked for each 
  particular \bh\ solution. 
  
  The sufficient stability conditions for electric \bhs\ have a form similar to \rf{stab1}, \rf{stab2}
  in terms of the P-framework of the theory \cite{Mor-03}, which could be expected due to FP 
  duality. Thei reformulation to the F-framework is not possible in a general form due to problems
  with a relationship between $f$ and $p$, see above.     
    
  More general stability conditions for NED-GR solutions involving both $F\mn$ and $\sF\mn$ 
  have been recently obtained by K. Nomura, D. Yoshida and J. Soda in \cite{soda-20}.
  
% --------------------------------------
\subsection{Examples}
% --------------------------------------

  Let us enumerate some particular examples of the Lagrangians $L(f)$ discussed in the 
  literature, along with their basic properties at $f >0$ (that is, for their magnetic solutions): 
  the existence of a correct Maxwell weak field (MWF) limit, a finite limit as $f\to\infty$, 
  necessary for a regular center in magnetic solutions, and the validity of the causality, 
  unitarity and stability conditions, \rf{causal} and \rf{stab3}.

% ------------------------------------------------- tab 1  
\begin{table}
%\centering
\caption{\small            \label{tab1}     
		Some examples of $L(f)$ NED theories: properties of magnetic solutions ($f\geq 0$)  }
\begin{tabular}{p{1.8cm}p{3.5cm}p{1.5cm}p{1.5cm}p{1.5cm}p{1.5cm}}
\hline\noalign{\smallskip}
   References  & Lagrangian$^a$ 
   & Correct MWF limit & Finite\,\,as\qq $f\to\infty$ &  Condition $L_{ff} < 0$ & Condition $\Phi >0$\\
\noalign{\smallskip}\svhline\noalign{\smallskip}
    \cite{B-Inf}, \rf{L-BI} & $\beta^2 \big(\!-1 +\! \sqrt{1+2f/\beta^2}\big)$  &  yes  & no  & yes  & yes
\yy
    \cite{Kru-16, kru-21}, \rf{L-kru20} & $\dfrac{f}{1 + 2\beta f}$   &  yes   & yes  & yes  & partly$^b$
\\[10pt]
    \cite{kru-16a}  & $\beta^{-1} \arctan (\beta f)$                    & yes   & yes   & yes  & partly
\yy
	\cite{kru-16b} &  $\beta^{-1} \arcsin (\beta f)$                   & yes   & no    &  no   &  yes
\yy
	\cite{kru-19}  &  $\beta^2 \log \Big( 1 + \dfrac {f}{\beta^2}\Big)$ & yes  &  no  & yes  & partly
\yy
    \cite{FW-16}   &   $\dfrac{f}{(1 + (\beta f)^{1/4})^4}$ & yes & yes  & yes  & partly
\\[10pt]
	\cite{k-NED}, \rf{L-ch}	& $\dfrac{f}{\cosh^2(\beta |f/2|^{1/4})}$   & yes   & yes  & no & partly
\\				  					
\noalign{\smallskip}\hline\noalign{\smallskip}
\end{tabular}
$^a$ In all examples, $\beta = \const >0$. \\
$^b$ Here and in other lines, ``partly'' means that $\Phi >0$ at $f$ smaller than some critical value.
\end{table}
% ---------------------------------------

  It is convenient to do that in the form of a table, see Table 1. Among the conditions 
  \rf{causal} and \rf{stab3} we select there the inequalities $L_{ff} < 0$ and $\Phi > 0$
  because the condition $L > 0$ holds in all examples, and $L_f > 0$ in all of them except 
  the one with hyperbolic cosine. 

  The first line represents the truncated Born-Infeld Lagrangian which does not provide a regular
  center but satisfies the conditions \rf{causal} and \rf{stab3}. The next four lines correspond to
  different examples of NED considered by S. Kruglov, the first two of them provide regular 
  magnetic \bhs. The sixth line represents a special case from numerous examples considered 
  by Fan and Wang in \cite{FW-16}, selected there because it both has a correct MWF limit 
  and provides a regular center. The last line is the special case of NED discussed above.   
  The explicit form of the solutions can be found in the cited papers along with detailed 
  discussions of their properties. This list certainly does not pretend to be complete, and many 
  other solutions have been obtained and studied.   
  
  It can be observed from the table that in all NED theories that provide a regular center 
  (those with ``yes'' in the column ``finite as $f\to\infty$''), the inequality $\Phi > 0$ does 
  not hold at sufficiently high values of $f$, in accordance with Theorem \ref{theo3}.

% =====================================
\section{NED with more general Lagrangians}\label{sec5}
% =====================================
\subsection{Systems with $L = L(f,h)$}
% -----------------------------------------------------

  Beginning with the paper by Born and Infeld \cite{B-Inf}, the researchers considered NED theories 
  with Lagrangians more general than $L(f)$, depending on \emag\ invariants other than $f$. 
  The first and the most natural candidate is the pseudoscalar $h = \sF\mn F\MN = 2{\bf BE}$, 
  where {\bf E} and {\bf B} are the the electric field strength and magnetic induction 3-vectors, 
  respectively. Now the total action has the form
\beq            \label{S2}
			S = \Half \int \sqrt {-g}\, d^4 x [R - L (f,h)]. 	
\eeq
  Special cases of $L(f,h)$ are the Born-Infeld Lagrangian
\beq              \label{BInf}
			L^{\rm BI} =  b^2 \bigg(-1+ \sqrt{1 + \frac f{2b^2} - \frac{h^2}{16 b^4}}\bigg),
			\qq b >0,
\eeq  
  and the so-called modified Maxwell (ModMax) Lagrangian \cite{bandos-20, sorokin-21}
\beq              \label{ModMax}
			L^{\rm MM} = \frac 14 \Big( f \cosh \gamma - \sqrt{f^2 + h^2} \sinh\gamma \Big),
			\qq  \gamma \in \R.
\eeq  
  Both these models are distinguished by their symmetry properties, in particular,
  the ModMax NED is conformally and duality invariant.\footnote
  		{In our notations, see \rf{S2}, some of the signs and factors are different from those 
  		 in \cite{bokulic-22} and other papers. In particular, the Maxwell theory here 
  		 corresponds to $L(f, h) = f$.}
  		 
  The \emag\ field equations due to \rf{S2} read
\beq              \label{Max*}
		\nabla_\mu(L_f F\MN - L_h \sF\MN) =0, \qq    \nabla_\mu \sF\MN =0,
\eeq  
  and the \emag\ field SET has the form   
\beq
             T\mN = -2 L_f F_{\mu\alpha} F^{\nu\alpha} + \half \delta\mN (L -h L_h).
\eeq  

  Assuming static spherical symmetry, hence having only radial electric and magnetic fields, 
  we are again dealing with a SET with $T^r_t =0$ and $T^t_t = T^r_r$, and the metric 
  can be written in the form \rf{ds2}. We then have according to \rf{Max*}
\beq
		L_f F^{tr} - L_h \sF^{tr} = \frac{q_e}{r^2}, \qq    F_{\theta\phi} = q_m \sin \theta,
\eeq  
  with the corresponding charges $q_e, q_m = \const$.
  
  For \ssph\ solutions to the NED-GR equations with $L = L(f,h)$, a number of no-go theorems 
  have been proved in Ref.\,\cite{bokulic-22}. According to these theorems, such solutions 
  with the metric \rf{ds2} cannot describe a geometry with a regular center under the 
  following assumptions on the \emag\ field:
\begin{enumerate} \itemsep 5pt
\item %1
	$q_e \ne 0$, $q_m =0$ (electric), MWF limit.
\item %2
  	$L(f, h) = L(f)$, $q_e \ne 0$, $q_m \ne 0$ (dyonic).
\item %3
	$L(f, h) = f + \eta(h)$, with an arbitrary function $\eta(h)$, $q_m \ne 0$ (magnetic or dyonic).
\item %4
	$L(f, h) = f + a f^s h^u$, with real $a \ne 0$, positive integers $s > 1,\ u >1$, and
	$q_e \ne 0$, $q_m \ne 0$ (dyonic).
\item %5  
     $L(f, h) = f + af^2 + bfh + ch^2$, where $a, b, c \in \R$, $q_e \ne 0$, $q_m \ne 0$ (dyonic).    
\item %6
	$L(f, h)$ given by \rf{BInf} or \rf{ModMax}, $q_m \ne 0$ (magnetic or dyonic).             
\item %7
	$L(f, h) = f + af^2 + bfh + ch^2$, the pair $(b,c) \ne (0,0)$,
								$q_e = 0$, $q_m \ne 0$ (magnetic).
\end{enumerate}
  The numbering here corresponds to the theorem numbers in Ref.\,\cite{bokulic-22}.
  Theorem 2 from this list coincides with our Theorem \ref{theo2} presented in Section \ref{sec4}.
  We can notice that only two theorems, the first and the sixth ones, use the assumption
  of a correct MWF limit: in all other cases considered, a regular center is impossible irrespective of
  the weak field behavior of the theory.    
  
  On the other hand, there still remain some opportunities of obtaining regular BHs other 
  than purely magnetic ones with $L = L(f)$. For example, both with $L(f)$ and $L(f,h)$,
  purely electric solutions with a regular center are possible with a theory having no MWF limit.
  Then, assuming a regular central region governed by such a theory, one can obtain an \asflat\ 
  electrically charged configuration by using a kind of phase transition, such that outside a 
  certain sphere $r = r_{\rm crit}$, another NED theory will be valid, having a correct MWF
  limit, as was suggested in \cite{Bur1}. 
   
% -----------------------------------------------------  
\subsection{Systems with $L = L(f, J)$}  
% -----------------------------------------------------
  
  One more invariant, in addition to $f$, 
\beq                   \label{J4}
		J \equiv J_4 = F\mn F^{\nu\rho} F_{\rho\sigma} F^{\sigma\mu},
\eeq      
  has also been used for formulating an extended NED theory \cite{denis-19, sokolov-21}.
  With this invariant, the action reads
\beq            \label{S4}
			S = \Half \int \sqrt {-g} d^4 x [R - L (f,J)], 	
\eeq  
  the \emag\ field equations are
\bearr                \label{Max4}
			\nabla_\mu Q\MN =0, \qq    \nabla_\mu \sF\MN =0,
\yyy
			Q\MN := 4L_f F\MN + 8 L_J F^{\mu\rho} F_{\rho\sigma} F^{\sigma\nu},			
\ear  
  where $L_f = \d L/\d  f$ and $L_J = \d L/\d J$. The SET has the form
\beq                                   \label{SET-J}
			T\mN = -2 (L_f +f L_J)F_{\mu\alpha} F^{\nu\alpha} 
						+ \Half \delta\mN [(f^2 - 2J)L_J + L]. 
\eeq  
  An important subclass of the theories \rf{S4} is called conformal NED, or CNED, and is 
  characterized by a zero trace of the SET \rf{SET-J} \cite{denis-19, sokolov-21}, hence,
\beq                       \label{CNED}
	               T = 	2(L - f L_f -2J L_J) =0. 	
\eeq  
  In this case, the SET as a whole is a multiple of the Maxwell field SET, and the 
  field equations \rf{Max4} are invariant under general conformal mappings of the metric 
  (conformally invariant) like the Maxwell equations. 
  
  The theory has a Maxwell asymptotic behavior at small fields if $L(f, J) \approx f$,
  so that $L_f \to 1$ and $|L_J < \infty|$ as $F\mn \to 0$ (the latter condition takes into
  account that $J \sim f^2$ at small $F\mn$).
  
  Assuming static spherical symmetry, we have, as before, only radial electric and magnetic 
  fields, and the SET has again the properties $T^r_t =0$ and $T^t_t = T^r_r$, and the metric 
  can be written in the form \rf{ds1}. Specifically,
\bearr                     \label{SET-JS}
		T^t_t = T^r_r = 2 (L_f + f L_J) E^2 + \half L - 4 B^2 E^2 L_J,
\nnnv
		 T^\theta_\theta = T^\phi_\phi = - 2 (L_f + f L_J) B^2 + \half L - 4 B^2 E^2 L_J,
\ear
  with $E^2 = F^{tr}F_{rt}$ and $B^2 = F^{\theta\phi}F_{\theta\phi}$. The field equations 
  \rf{Max4} lead to
\beq       \label{QF4}
   		Q^{tr} = 4 (L_f +f L_J) F^{tr} = \frac{q_e}{r^2}, \qq  F_{\theta\phi} = q_m \sin\theta.
\eeq  

  Let us prove that the same no-go theorems as in $L(f)$ theories coupled to GR, are valid in the 
  theories \rf{S4}. 
% =========  
\begin{theorem}        \label{theo4} 
  The theories \rf{S4} do not admit \ssph\ electric solutions ($q_e\ne 0,\ q_m =0$) with a regular center 
  and a correct MWF limit. 
\end{theorem}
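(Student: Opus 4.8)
The plan is to reduce the electric $L(f,J)$ problem to the pattern of the proof of Theorem \ref{theo1}, using the observation that for a purely radial electric field the modified Maxwell equation \rf{QF4} and the SET \rf{SET-JS} are formally those of an $L(f)$ theory with the derivative $L_f$ replaced by the ``effective'' derivative $\mathcal{L}:=L_f+fL_J$. Setting $q_m=0$ in \rf{QF4} forces $F_{\theta\phi}=0$, hence $B\equiv 0$ everywhere; then $f=-f_e=-2E^2\le 0$ and, since $J=\half f^2$ for a radial electric field, the quartic pieces $(f^2-2J)L_J$ and $B^2E^2L_J$ in \rf{SET-J}--\rf{SET-JS} drop out, leaving
\[
   T^t_t=T^r_r=\mathcal{L}\,f_e+\half L,\qquad\qquad T^\theta_\theta=T^\phi_\phi=\half L .
\]
At the same time \rf{QF4} becomes $4\mathcal{L}F^{tr}=q_e/r^2$, and since $(F^{tr})^2=E^2=\half f_e$ in the gauge of \rf{ds2}, squaring gives $\mathcal{L}^2 f_e=q_e^2/(8r^4)$.

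Next I would run the argument of Theorem \ref{theo1} with $L_f$ replaced by $\mathcal{L}$. Exactly as there, diagonality of the Ricci tensor makes $R\mn R\MN$ a sum of squares of the components $R^\mu_\nu$, so regularity at $r=0$ forces each $R^\mu_\nu$, hence (through the Einstein equations) each $T^\mu_\nu$, to be finite there. From $T^\theta_\theta=\half L$ one reads that $L$ stays finite at the center; then $T^t_t-T^\theta_\theta=\mathcal{L}f_e$ shows that $\mathcal{L}f_e$ stays bounded near $r=0$, say $|\mathcal{L}f_e|\le C$. Because $\mathcal{L}^2 f_e=q_e^2/(8r^4)>0$ one has $f_e>0$ and $\mathcal{L}\ne 0$ near the center, so the two relations may be combined: $f_e=(\mathcal{L}f_e)^2/(\mathcal{L}^2 f_e)\le 8C^2 r^4/q_e^2\to 0$, while $|\mathcal{L}|=(\mathcal{L}^2 f_e)/|\mathcal{L}f_e|\ge q_e^2/(8Cr^4)\to\infty$. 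Hence at the center $f\to 0$, the field strength $F\mn\to 0$ (recall $B\equiv 0$), and $L_f+fL_J\to\infty$.

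The contradiction then comes from the MWF hypothesis: $L(f,J)\approx f$ as $F\mn\to 0$ means $L_f\to 1$ with $L_J$ bounded, so with $f\to 0$ we get $L_f+fL_J\to 1$, incompatible with $L_f+fL_J\to\infty$. Therefore the theories \rf{S4} admit no \ssph\ electric solution possessing simultaneously a regular center and a correct Maxwell weak-field limit. As in Theorem \ref{theo1}, the MWF assumption is essential: without it an electric regular center is not excluded, which is exactly the loophole behind the phase-transition construction of \cite{Bur1}.

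I expect the only genuinely delicate points to be (i) confirming the identity $J=\half f^2$ for a radial electric field, so that the quartic terms in \rf{SET-J}--\rf{SET-JS} truly disappear and the system becomes a clean ``$L_f\mapsto L_f+fL_J$'' copy of the $L(f)$ case; and (ii) the elementary but easy-to-mishandle bookkeeping showing that boundedness of $\mathcal{L}f_e$ together with the blow-up of $\mathcal{L}^2 f_e$ forces $f_e\to 0$ and $|\mathcal{L}|\to\infty$ at the same time rather than, say, $\mathcal{L}$ merely oscillating. Everything beyond that is essentially a verbatim repetition of the proof of Theorem \ref{theo1}.
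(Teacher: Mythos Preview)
Your proof is correct and follows essentially the same route as the paper: bound $|T^t_t-T^\theta_\theta|=2|L_f+fL_J|E^2$ at a regular center, combine with $16(L_f+fL_J)^2E^2=q_e^2/r^4\to\infty$ from \rf{QF4}, and conclude $E\to 0$ while $|L_f+fL_J|\to\infty$, contradicting the MWF limit. Your explicit packaging via the substitution $L_f\mapsto\mathcal{L}=L_f+fL_J$ (making the problem a verbatim copy of Theorem~\ref{theo1}), your verification that $J=\tfrac12 f^2$ for a radial electric field so the quartic SET terms drop out, and your algebraic bookkeeping $f_e=(\mathcal{L}f_e)^2/(\mathcal{L}^2f_e)$ are all welcome additions that the paper leaves implicit, but the underlying argument is the same.
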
    
\begin{proof}
  As with $L(f)$ theories in Section \ref{sec4}, assuming regularity at $r=0$, we must require that all 
  components of $T\mN$ should be finite, as well as their linear combinations. In particular, we require that
\beq                    \label{QQ1}
		|T^t_t - T^\theta_\theta| =  2 |L_f - 2E^2 L_J| E^2 < \infty.
\eeq 
  On the other hand, from \rf{QF4} we obtain 
\beq                    \label{QQ2}
		Q^{tr} Q_{rt} = 16 E^2 (L_f - 2E^2 L_J)^2 = q_e^2 /r^4 \to \infty\ \ \ {\rm as}\ r\to 0.
\eeq   
  The conditions \rf{QQ1} and \rf{QQ2} are only compatible if $E\to 0$ (that is, the field becomes weak)
  and $|L_f - 2E^2 L_J| \to \infty$, contrary to the desirable MWF limit. This completes the proof.
\qed
\end{proof}  
% ==========  
\begin{theorem}          \label{theo5}
  The theories \rf{S4} do not admit \ssph\ dyonic solutions ($q_e\ne 0,\ q_m \ne 0$) with a regular center. 
\end{theorem}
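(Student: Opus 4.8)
The plan is to run the same argument that proves Theorems \ref{theo2} and \ref{theo4}: switch on the magnetic charge so that $B^2$ diverges at the center, use the electric Maxwell equation to manufacture a second diverging quantity, and show that finiteness of the mixed components of $T\mN$ cannot tolerate both at once.

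First I would record the behaviour of the fields as $r\to 0$ for any candidate regular dyonic solution written in the form \rf{ds1}. From \rf{QF4}, $F_{\theta\phi}=q_m\sin\theta$ gives $B^2 = F^{\theta\phi}F_{\theta\phi} = q_m^2/r^4\to\infty$, while the integrated source-free equation $Q^{tr}=4(L_f+fL_J)F^{tr}=q_e/r^2$ in \rf{QF4}, combined with $E^2=(F^{tr})^2$ exactly as in the step leading to \rf{QQ2}, gives
\[
E^2\,(L_f + f L_J)^2 \;=\; \frac{q_e^2}{16\,r^4}\ \longrightarrow\ \infty .
\]

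Next, as in all the earlier no-go arguments, regularity at $r=0$ forces every mixed component of $T\mN$, and hence every linear combination of them, to stay finite. From \rf{SET-JS} the terms $\half L$ and $-4B^2E^2L_J$ cancel in the difference, leaving
\[
T^t_t - T^\theta_\theta \;=\; 2\,(L_f + f L_J)\,(E^2 + B^2),
\]
which must be bounded near the center. Since $E^2\ge 0$ and $B^2\ge 0$, boundedness of $|L_f+fL_J|\,(E^2+B^2)$ bounds each nonnegative piece, so $(L_f+fL_J)E^2$ and $(L_f+fL_J)B^2$ are separately bounded as $r\to 0$, hence so is their product $(L_f+fL_J)^2 E^2 B^2$. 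But substituting the two divergences above gives $(L_f+fL_J)^2 E^2 B^2=[E^2(L_f+fL_J)^2]\,B^2=(q_e^2/16r^4)(q_m^2/r^4)=q_e^2 q_m^2/(16 r^8)\to\infty$ whenever $q_e q_m\ne 0$ — a contradiction. Hence a regular center is impossible.

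I do not expect a genuine obstacle. The only points needing care are bookkeeping items essentially already present in the excerpt: that in the truly dyonic case the quartic piece of $Q\MN$ in \rf{Max4} still reduces, on the $tr$ component, to the coefficient $(L_f+fL_J)$ written in \rf{QF4} — so the dyonic version of \rf{QQ2} holds verbatim — and that the $L_J$-dependent terms in $T^t_t$ and $T^\theta_\theta$ in \rf{SET-JS} indeed cancel in their difference. Both are one-line checks. As in Theorem \ref{theo2}, and unlike Theorem \ref{theo4}, no Maxwell weak-field hypothesis is used: the magnetic charge alone drives the contradiction, so the obstruction holds for arbitrary $L(f,J)$, the conformal (CNED) subclass singled out by \rf{CNED} included.
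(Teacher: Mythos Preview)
Your proof is correct and follows essentially the same route as the paper: both use finiteness of $T^t_t-T^\theta_\theta=2(L_f+fL_J)(E^2+B^2)$ to bound the electric and magnetic pieces separately, and then combine this with the divergence of $E^2(L_f+fL_J)^2=q_e^2/(16r^4)$ from \rf{QF4} and of $B^2=q_m^2/r^4$. The only cosmetic difference is that the paper extracts the contradiction as $|L_f+fL_J|\to\infty$ (electric side) versus $|L_f+fL_J|\to 0$ (magnetic side), whereas you multiply the two bounded pieces and compare directly with the divergent product $q_e^2q_m^2/(16r^8)$; these are equivalent one-line repackagings of the same argument.
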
    
\begin{proof}
    The same requirement as in the previous theorem, $|T^t_t - T^\theta_\theta| < \infty$, 
    necessary to be valid at a regular center, now reads
\beq                    \label{QQ3}
		|T^t_t - T^\theta_\theta| =  2 |L_f +f L_J| (E^2 + B^2) < \infty.
\eeq 
  Moreover, since both $E^2 > 0$ and $B^2 >0$, this inequality should hold with each of them taken
  separately. Applying it with $E^2$ together with the first equality \rf{QF4} that now leads to  
\beq                    \label{QQ4}
		Q^{tr} Q_{rt} = 16 (L_f +f L_J)^2 = q_e^2 /r^4 \to \infty\ \ \ {\rm as}\ r\to 0,
\eeq   
  we obtain, as before, $E\to 0$ and $L_f +f L_J \to \infty$. The same condition \rf{QQ3} with $B^2$
  leads to $L_f +f L_J \to 0$ (since $B^2 = q_m^2/r^4 \to \infty$). The resulting contradiction proves 
  the theorem..
\qed
\end{proof}    
% ==========
    
  It is also evident than none of the CNED theories can produce a regular BH or, more generally, 
  a solution with a regular center. Indeed, since the SET is proportional to that of Maxwell electrodynamics,
  it cannot lead to any \ssph\ metric other than \RN, though certainly the interpretation of its constants
  $m$ and $q$ will be different.
  
  As to purely magnetic solutions ($q_e\ne 0,\ q_m \ne 0$), in which $f = 2q_m^2/r^2$ and 
  $J = 2 q_m^4/r^8$, a regular center is possible under the condition that $L(f, J)$ tends to a 
  finite constant as both $f$ and $J$ tend to infinity. The whole situation looks quite the same 
  as with $L(f)$ theories.
  
  Let us give a confirming example, taking as a basis \eqn{L-kru20} for $L(f)$:
\beq                 \label{exQQ}
		L(f,J) = \frac{f}{1+ af/2} + \frac{bJ}{1+c J/2}, \qq   a,b,c = \const > 0,
\eeq  
  This Lagrangian has a correct MWF limit and tends to a finite limit at large $f$ and $J$.
  Since with $q_e=0$, according to \rf{SET-JS}, the density is simply $\rho = L/2$, the metric function 
  $A(r)$ is found as
\bearr                                \label{A-exQQ}
      A(r) = 1 - \frac{2 M(r)}{r}, \qq M(r) = M_1(r)+ M_2(r),
\nnn      
      M_1(r) = \frac{q^2}{2} \int \frac {r^2 dr}{r^4 + aq^2}, \qq
	 M_2(r) = \frac{bq^4}{2} \int \frac {r^2 dr}{r^8 + cq^4}, 
\ear
  where $q = q_m$. Integration gives 
\bearr                 \label{M-exQQ}
          M_1(r) = \frac{q^2} {8 h\sqrt 2}
          \bigg[2 \arctan \frac{h + \sqrt{2} r}{h} - 2 \arctan \frac{h - \sqrt{2} r}{h}
                 + \log \frac {h^2 - \sqrt{2} h r + r^2}{h^2 + \sqrt{2} h r + r^2}\bigg],
\nnn
          M_2(r) = \frac {bq^4}{16 j^5}          
          \bigg[2 C \Big(\arctan \frac{r +j C}{j S} + \arctan \frac{r - j C}{j S}\Big) 
\nnn          \inch
          		 -2 S \Big(\arctan \frac{r +j S}{j C} + \arctan \frac{r - j S}{j C}\Big)	
\nnn   \inch       		 
			  + S \,\log \frac{j^2 + 2 C j r + r^2}{j^2 - 2 C j r + r^2}
			  + C \, \log \frac{j^2 - 2 S j r + r^2}{j^2 + 2 S j r + r^2}\bigg],
\ear  
  where we have denoted $h = (aq^2)^{1/4},\ j = (cq^4)^{1/8},\ S=\sin(\pi/8),\ C= \cos(\pi/8)$. 
  At $r\to \infty$ we obtain
\beq                 \label{as-exQQ}
               M_1(r) \to \frac{\pi q^{3/2}}{4\sqrt{2}a^{1/4}},   \quad\ 
               M_2(r) \to \frac{\pi b q^{3/2} (C-S)}{8 c^{5/8}} , \quad\
               M = \lim_{r\to \infty}(M_1+M_2),
\eeq  
  where $M$ is the \Scw\ mass of completely \emag\ origin. At the center, we have
\beq                 \label{c-exQQ}
		    M_1(r) \approx \frac{r^3}{6a},   \quad\ M_2(r) \approx \frac{br^3}{6c}\quad\
		    {\rm as}\ \ r\to 0,
\eeq  	
   which leads to $A(r) = 1 + \cO(r^2)$, satisfying the regular center condition \rf{reg-c}. 
   We have obtained a regular \asflat\ solution in $L(f,J)$ NED with a correct WMF limit. 
   
   Is it a \bh\ solution? To make it clear, let us fix the parameters $q=1,\ a=1,\ c=1$, then the only 
   remaining free parameter is $b$, and
\beq
			M = \frac{\pi}{8}[\sqrt{2} + b(C-S)].
\eeq   
  The behavior of $A(r)$ at different values of $b$ is shown in Fig.\,\ref{fig4}.
% --------------------------- fig 4
\begin{figure}
\centering
\includegraphics[width=12cm]{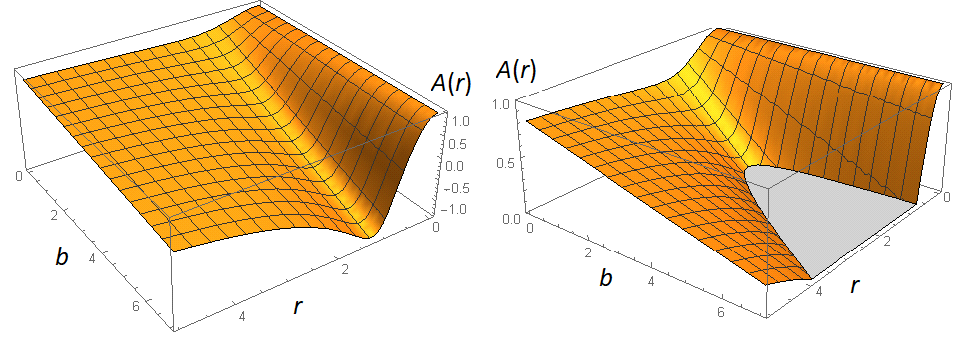} 
\caption{\small
		The behavior of $A(r)$ in the magnetic solution for the theory \rf{exQQ} with $q=a=c=1$
		and different values of $b$. The left panel shows the full range of $A(r)$, while in the right 
		one the same dependence is shown with a gray ``floor'' on the level $A=0$, visualizing 
		the places where $A$ becomes zero or negative, and the solution then describes a \bh.}
		\label{fig4}
\end{figure}  
% ----------------------------         
   	An inspection shows that at small $b$ the solution is of solitonic nature, at $b \approx 2.436$
   	(corresponding to $M \approx 1.073$) there emerges a single extremal horizon, and at larger 
   	$b$ (large $M$) we obtain a regular \bh\ with two simple horizons.  
      	
% ============================
\section{Conclusion}\label{sec6}
% ============================

  We have discussed the opportunities of obtaining regular \sph\ \bh\ solutions in GR sourced 
  by nonlinear \emag\ fields governed by different NED theories. It happens that such NED 
  black holes (as well as solitons) with a regular center in $L(f)$ theories can exist with pure 
  electric or pure magnetic charges, and only systems with a magnetic charge are compatible 
  with Lagrangians having a correct Maxwell behavior at small $f$. Dyonic configurations, 
  with $q_e\ne 0$ and $q_m\ne 0$, cannot contain a regular center, whatever be the 
  function $L(f)$. 
  
  In Maxwell's electrodynamics there is the well-known symmetry (duality) between 
  electric and magnetic fields, leading to the same symmetry between the corresponding 
  solutions to the Einstein-Maxwell equations, at least in the absence of currents ans charges.
  Unlike that, in NED, we only have FP duality that connects purely electric and purely
  magnetic configurations with the same metric but sourced by different NED theories.
  Accordingly, in a theory specified by a particular function $L(f)$, the properties of electric
  and magnetic solutions are quite different.

  It turns out that magnetic solutions lead to completely regular configurations, while for their 
  electric counterparts, obtained from them using FP duality and well-behaved in the framework 
  of the ``Hamiltonian'' formulation of NED, the Lagrangian formulation is ill-defined, and the 
  behavior of NED photons exhibits undesired features at some intermediate radii: they
  experience an infinite blueshift, indicating an instability of such a background configuration.
  
  The dynamic stability of regular magnetic solutions is also questionable since one of the 
  sufficient stability conditions ($\Phi >0 $) is inevitably violated near a regular center.
  Thus general stability results for regular \bhs\ probably cannot be obtained, and stability 
  studies of individual solutions seem to be necessary.

  We here did not touch upon thermodynamic properties of NED \bhs, this important issue is
  discussed in many papers, see, among others, \cite{Bret-05, Kru-16, FW-16, kru-21, balart-21} 
  and references therein. Let us only remark here that a thermodynaic instability of \bhs\ 
  related to their negative heat capacity is implemented in the process of Hawking evaporation,
  which is very slow for sufficiently large \bhs\ and can be practically ignored for \bhs\ 
  with stellar and larger masses, irrespective of their global regularity properties.

  There are many results obtained with more general NED Lagrangians, such as $L(f,h)$
  and $L(f, J)$. Many no-go theorems concerning possible regular \bhs\ with $L(f, h)$ NED
  have been presented in Refs.\,\cite{bokulic-21, bokulic-22}. As to $L(f, J)$ NED, involving
  a fourth-order \emag\ invariant, we have verified here that the restrictions on regular \bh\
  existence obtained with $L(f)$ are extended to this class of theories without change. In 
  particular, regular \bhs\ with a magnetic charge can also be obtained, as we have 
  confirmed with an explicit example. Very probably these results can be further extended to 
  include \emag\ invariants of still higher orders constructed in the same manner as 
  $J = F\mn F^{\nu\rho} F_{\rho\sigma} F^{\sigma\mu}$. 
  
  Among the remaining theoretic problems deserving further studies let us mention the 
  dynamic stability problem for regular magnetic \bhs\ and the causality issue related to the 
  predicted superluminal velocities of NED photons.

\end{document}